\newcommand{\U}{\mathcal{U}}
\newcommand{\A}{\mathcal{A}}
\newcommand{\hh}{\mathcal{H}}
\newcommand{\F}{\mathcal{F}}
\newcommand{\D}{\mathcal{D}}
\newcommand{\PA}{\mathcal{P}}
\newcommand{\E}{\mathbb{E}}
\newcommand{\N}{\mathbb{N}}
\newcommand{\R}{\mathbb{R}}
\newtheorem{myth}{Theorem}
\newtheorem{prop}{Proposition}
\newtheorem{lemma}{Lemma}[section]
\newtheorem{example}{Example}
\newtheorem*{ex1cont*}{Example 1 - cont}
\title{Choquet expected utility across time \\ \vspace{4mm} without lotteries}
\author{Lorenzo Bastianello\footnote{Corresponding author: \texttt{lorenzo.bastianello@u-paris2.fr}.}, Universit\'e Paris 2 Panth\'eon-Assas\\
Jos\'e Heleno Faro, INSPER S\~ao Paulo}
\date{\today\\
\textbf{Preliminary version - do not circulate}
}
\begin{document}
\setlength{\baselineskip}{1\baselineskip}
\thispagestyle{empty}

\vspace*{0.5cm}

\begin{center}
{\LARGE\bf Time discounting under uncertainty\footnote{We would like to thank Antoine Bommier, Alain Chateauneuf,  Mehmet Ismail, Asen Kochov, Marcus Pivato, Jean-Marc Tallon, St\'ephane Zuber and audiences at numerous seminars and conferences. Jos\'e Heleno Faro is also grateful for financial support from the CNPq-Brazil (Grant no. 307508/2016-1).}}
\end{center}

\vspace*{.2cm}
\begin{center}
\begin{tabular}{ccc}
{\large\sc Lorenzo Bastianello\footnote{Corresponding author.}} & $\qquad$ & {\large\sc Jos\'e Heleno Faro}  \\
{\footnotesize [lorenzo.bastianello@u-paris2.fr]} & & {\footnotesize [josehf@insper.edu.br]} \\
{\small LEMMA} & & {} \\
{\small Universit{\'e} Paris 2 Panth{\'e}on-Assas} & &  {\small Insper S\~ao Paulo} \\
\end{tabular}
\end{center}

\begin{center}
\vspace*{0.4cm}
This version: March 2020
\vspace*{0.4cm}
\end{center}

\begin{abstract}
We study  intertemporal decision making under uncertainty. We fully characterize discounted expected utility in a framework \textit{\`a la} Savage. Despite the popularity of this model, no characterization is available in this setting. The concept of stationarity, introduced by Koopmans for deterministic discounted utility, plays a central role for both attitudes towards time and towards uncertainty.  We show that a strong stationarity axiom characterizes discounted expected utility. When hedging considerations are taken into account, a weaker stationarity axiom generalizes discounted expected utility to Choquet discounted expected utility, allowing for non-neutral attitudes towards uncertainty.


\medskip \par\noindent
{\sc Keywords:\/} Intertemporal choice, ambiguity, (Choquet) discounted expected utility, stationarity, temporal hedging.

\medskip\par\noindent
{\sc JEL Classification Numbers:\/} D81, D83, D84.

\end{abstract}

\section{Introduction}\label{sec:intro}

When making economic decisions, agents usually need  to take into account two important dimensions: \textit{time} and \textit{uncertainty}. Consider for instance a firm that wants to implement a project that will deliver a stochastic cash flow in the future or a government that needs to decide how to allocate its budget taking into account GDP growth in the following years. In both cases, decision makers (DMs henceforth) are required to make choices that involve uncertain outcomes occurring at future dates.

One of the most popular models used both in microeconomics and macroeconomics to evaluate uncertain streams of income or consumption, is the \textit{(exponential) discounted expected utility model}. The choice problem faced by the firm or the government can be formalized in the following way.  Suppose that, at time $t=0$, Nature chooses a state of the world $\omega\in \Omega$ which is hidden to the agent. 
The state of the world determines the amounts of income that the DM will receive across  future periods. Given $\omega\in\Omega$, she gets $h_t(\omega)$ at time $t\geq 0$, where $h_t$ is a measurable function over $\Omega$. Therefore, at time $t=0$, she is facing a stochastic stream of income. The discounted expected utility model  says that the stochastic stream $h:=(h_0,h_1,\dots)$ is evaluated through the function
\begin{equation}\label{eq:DEU}
V(h)=\sum_{t=0}^{\infty}\beta^t \E_P[u(h_t)],
\end{equation}
where  $u(\cdot)$ is an instantaneous utility function, the DM's attitude towards time is described through the discount factor $\beta^t$ with $\beta\in (0,1)$ and the expectation is taken with respect to a subjective probability $P$ over $\Omega$. Mathematically, the expression in (\ref{eq:DEU}) prescribes to compute the expected utility of the random variables $h_t$, actualize its value through the discount factor $\beta^t$ and  sum up the actualized expected utilities.\footnote{There are several interesting ways to discount the future that are outside the scope of this paper. While exponential discounting may be appropriate in the case of a firm, who can set $\beta=\frac{1}{1+r}$ where $r$ is the interest rate, it may not be so compelling for other types of agents. For instance, interesting alternatives are the hyperbolic, quasi-hyperbolic and quasi-geometric discounting, see Loewenstein and Prelec \cite{LoewPrel}, Laibson \cite{laib}, Montiel Olea and Strzalecki \cite{OleaStra}, Phelps and Pollak \cite{PhelpsPollak} and Hayashi \cite{hay}. Another interesting approach appears in Chambers and Echenique \cite{Chamb} where the authors study models with multiple discount factors.  The focus of this paper is on the axiomatization of exponential discounted expected utility  and its possible generalizations under uncertainty. }

The intuitiveness and mathematical tractability of  formula (\ref{eq:DEU}) has made  discounted expected utility the most popular way to model intertemporal decisions involving  uncertainty. 
However, and quite surprisingly, to our knowledge there is no axiomatization available in the literature of the discounted expected utility model in a framework \textit{\`a la} Savage \cite{Savage}.\footnote{For an axiomatization of the discounted expected utility model with an infinite horizon \textit{under risk} one can see Theorem 1 and Theorem 2 in Epstein \cite{Epstein83}. See also Theorem 1, Proposition 2 and Proposition 3 in Peitler \cite{Peitler}.}  Our first contribution consists in providing a set of axioms that delivers the formula in (\ref{eq:DEU}). Axiomatizations enable us to understand which behavioral conditions stand behind a utility function. See Gilboa, Postlewaite, Samuelson and Schmeidler \cite{GPSS} for a recent explanation about the importance of representation results.

The key behavioral feature that delivers discounted expected utility is a \textit{stationarity} condition. Stationarity is an axiom proposed by Koopmans \cite{Koop}, \cite{Koop72} in the deterministic framework in order to characterize discounted utility. The formula of discounted utility appeared before its axiomatization and it is due to Samuelson \cite{Sam37}.  According to this model a DM evaluates a deterministic stream of income  $d:=(d_0,d_1,\dots)$ through the following utility function
\begin{equation}\label{eq:DU}
V(d)= \sum_{t=0}^{\infty} \beta^t u(d_t).
\end{equation}
Koopmans' stationarity axiom asserts that if an agent is indifferent between two streams of income, then she will remain indifferent if we shift all incomes one period ahead  and insert a common first period income. The reasoning can be done the other way round. If an agent is indifferent between two streams that are equal in the first component, then she will remain indifferent if we drop this first coordinate and advance all other incomes by one period for both streams.

To illustrate, consider the following example. A PhD candidate is  looking for an academic job. She is indifferent between two offers from university  $x$ and university $y$. The universities pay her the salaries (correlated to the monthly teaching load) listed in the table below.
\begin{center}
\begin{tabular}{lccccc} 
 & sep & oct & nov & dec & $\dots$ \\  

$x_0,x_1,\dots$  & 10 & 20 & 1 & $\dots$ & $\dots$ \\
  
$y_0,y_1,\dots$  &  5 & 23 & 2 & $\dots$ & $\dots$ \\	

%
\end{tabular}
\end{center}
Suppose that a new law postpones the beginning of the academic year to October, and that the government obliges all universities to give a compensation of $z=7$ to the newly hired professors in September. The streams of income   now look like
\begin{center}
\begin{tabular}{lccccc} 
 & sep & oct & nov & dec & $\dots$ \\  
%
%

$z,x_0,x_1,\dots$ &  7 & 10 & 20 & 1 & $\dots$ \\ 
 
$z,y_0,y_1,\dots$   &  7 & 5 & 23 & 2 & $\dots$ \\ 
\end{tabular}
\end{center}
Stationarity asserts that if the PhD candidate is indifferent between the two universities before the law, she should also stay indifferent afterwards. 

When uncertainty is involved, we introduce a stationarity axiom that characterizes discounted expected utility. A DM who is indifferent between two uncertain streams should remain indifferent if, given a period $t$, the random incomes  of both streams are shifted one period ahead starting from $t$, and a stochastic amount of income $h_t$ is inserted in period $t$.\footnote{The main difference with the deterministic framework is that we require the property of stationarity for every period $t$. This is needed for some technical issues about measurability that will be explained once the formal setting is introduced.} The similarity between the behavioral content of our axiom and   Koopmans' original axiom should be obvious. Theorem \ref{th:EU}, one of our main results, characterizes discounted expected utility with the help of this property. 

As it turns out, the implications of the stationarity axiom in  presence of uncertainty are way stronger than in the deterministic framework. Consider a firm that needs to decide today whether to invest in solar ($s$) technology or to build a carbon ($c$) plan. In year 2021 there will be elections (this is the source of uncertainty). If democrats win (event $D$), they will subsidize firms that invested in green technologies. If republicans win (event $R$), they will subsidize traditional ways of producing energy. The firm has therefore to choose between the two streams
\begin{center}
\begin{tabular}{lccccc} 
 & 2020 & 2021 & 2022 & 2023 & $\dots$ \\  
$s_0,s_1,\dots$  & 0 & $ s_1\begin{cases}
10 & \text{if } D \\
0 & \text{if } R
\end{cases}
$ & 0 & $\dots$ & $\dots$ \\[0.5cm]
$c_0,c_1,\dots$  &  0 & $ c_1\begin{cases}
0  & \text{if } D \\
10 & \text{if } R
\end{cases}
$ & 0 & $\dots$ & $\dots$ \\	

%
\end{tabular}
\end{center}
If the manager of the firm thinks that democrats and republicans are equally likely to win the elections, then it seems reasonable to indifference between $s$ and $c$.
Suppose now that the republicans are the incumbent. For budget reasons they are obliged to postpone any subsidy in favor of the technological sector to 2022, but they promise to reduces taxes on firms in 2021 in case they win the elections. The streams now look like
\begin{center}
\begin{tabular}{lccccc} 
 & 2020 & 2021 & 2022 & 2023 &  $\dots$ \\  
$s_0,h,s_1,\dots$   & 0 & $ h\begin{cases}
0 & \text{if } D \\
7 & \text{if } R
\end{cases}
$ & $ s_1\begin{cases}
10 & \text{if } D \\
0 & \text{if } R
\end{cases}
$ & 0 & $\dots$ \\[0.5cm]
$c_0,h,c_1,\dots$   &  0 & $ h\begin{cases}
0 & \text{if } D \\
7 & \text{if } R
\end{cases}
$ & $ c_1\begin{cases}
0 & \text{if } D \\
10 & \text{if } R
\end{cases}
$ & 0 & $\dots$  \\	

%
\end{tabular}
\end{center}
If stationarity holds, than the firm should remain indifferent between $s$ and $c$.
The rationale being that in year 2021 the firm gets $h$ whatever the investment decision, and from 2022 on, the streams are the same as in the previous scenario.  

 However notice that choosing to invest in carbon looks more uncertain. In fact, if democrats win, the firm will not get anything for two consecutive years. On the other hand, investing on solar would allow the firm to hedge against any possible winner. If democrats win, the firm gets 10 in 2022 and if republicans win it gets 7 in 2021: whatever the winner, the firm has a positive result in at least one period. Choosing solar offers a possible hedge against the uncertainty of the electoral result and therefore one may expect that a prudent (or pessimistic) manager strictly prefers $s$ to $c$. 
 
  The central idea is that $h$ is positively correlated with $c_1$ and negatively correlated with $s_1$. An uncertainty averse DM would prefer negatively correlated variable following one another in order to protect herself against uncertainty. In this case we say that the DM can \textit{temporally hedge} against uncertainty. Imposing stationarity when hedging considerations can be done may be quite demanding, and this is why, under uncertainty, we name this property  \textit{Strong Stationarity}.

The example above shows a particular case of a more general concept. Loosely speaking, two random variables $\phi,\psi$ are comonotonic if they ``vary in the same direction'': when $\phi$ takes relative high (low) values, $\psi$ also takes relative high  (low) values. For instance, in the example, $h$ is comonotonic with $c_1$ but not with $s_1$. Comonotonic random variables do not make it possible to hedge against uncertainty since they are positively correlated. 
A generalization of Strong Stationarity, that we call \textit{Comonotonic Stationarity}, restricts the shifts prescribed by Strong Stationarity only to cases in which no hedging considerations can be done. Technically, we require  the stochastic income plugged at period $t$ to be comonotonic with all variables that follow in both sequences. Theorem \ref{th:CEU}, which is our second important  contribution, shows that substituting Strong Stationarity with Comonotonic Stationarity makes it possible to generalize discounted expected utility to the following representation
\begin{equation}\label{eq:CEUrepresentation}
V(h)=\int \sum_{t=0}^{\infty} \beta^t u(h_t) dv.
\end{equation}
The integral is a \textit{Choquet integral}, taken with respect to a \textit{capacity} $v$. While a precise mathematical definition will be given later, we just recall here that a capacity is a non-necessarily additive set function and the Choquet integral is a mathematical tool needed in order to take expectations with respect to it.

The Choquet expected utility model was introduced in economics in an atemporal setting in the seminal paper of Schmeidler \cite{Schmeidler2}. This model generalizes the expected utility model and solves the famous Ellsberg's paradox, see Ellsberg \cite{Ellsberg}. Loosely speaking, the paradox shows that  agents cannot quantify  uncertainty in terms of a (single) probability measure. In this sense, our model of Choquet discounted expected utility in (\ref{eq:CEUrepresentation}) parallels Schmeidler's work in the atemporal setting. It is interesting to notice that the Choquet model has been fruitfully applied to an intertemporal setting at least since Gilboa \cite{Gilboa89}. Gilboa's paper however differs from ours since uncertainty is absent and the Choquet integral is used to model aversion (or love) for variability of payments across time.

There is a rich literature concerning decision making under uncertainty, and several models have been proposed to address the Ellsberg paradox. A popular  way to take into account agents' behavior towards uncertainty  is  through the  MaxMin expected utility model of Gilboa and Schmeidler \cite{GS}.\footnote{There are several other models that deal with choice under uncertainty and may exhibit non neutral attitudes towards uncertainty. For instance the already cited Choquet expected utility model of Schmeidler \cite{Schmeidler2}, the smooth ambiguity model of Klibanoff, Marinacci and Mukerji \cite{KMM}, the variational model of Maccheroni, Marinacci and Rustichini \cite{MMR}, the confidence  model of Chateauneuf and Faro \cite{Chato-Faro} etc.}  This model, developed in the atemporal setting, says that a DM  has a set of probabilities in her mind and  takes the minimum expected utility calculated with respect to the probabilities in this set. In a recent paper, Kochov \cite{Kochov} provided a generalization of discounted expected utility (\ref{eq:DEU}) by considering an intertemporal version of the MaxMin expected utility model of Gilboa and Schmeidler \cite{GS}. Instead of a single probability measure, the DM considers a set of possible probabilities and evaluates stochastic streams by the minimum (taken with respect to the probabilities in that set) of the discounted expected utilities.  We will discuss this relevant article in the main body of our paper.

The rest of the paper is organized as follows. Section \ref{sec:framework} introduces the framework and notation. Section \ref{sec:DU} recalls Koopmans' discounted utility model and presents his axioms and result. This paves the way to Section \ref{sec:DEU} in which we introduce uncertainty and we axiomatize the discounted expected utility model. Section \ref{sec:Choquet-DEU} generalizes the previous section by characterizing the Choquet discounted utility model. Section \ref{sec:conclusion} concludes. All proofs are gathered in the Appendix.


\section{Framework and Mathematical Preliminaries}\label{sec:framework}

We borrow Kochov's \cite{Kochov} setting and notation. 
Time is discrete and identified with $\N=\{0,1,2,\dots\}$. Let $X$ be a connected, separable and first-countable topological space. It will be interpreted as the \textit{space of outcomes}. For instance, if  $X$ is a convex subset of $\R^n$, then $x\in X$ may represent a bundle of goods. Let $\Omega$ be the set of \textit{states of nature}. A \textit{filtration} $(\F_t)_t$ over $\Omega$ is a sequence of algebras such that $\F_0=\{\emptyset,\Omega\}$ and $\F_t\subseteq \F_{t+1}$ for all $t\in \N$. We denote by $\F$ the union of these algebras $\F:=\cup_t\F_t$. A \textit{stochastic process} $h  :=(h_t)_{t\in \N}$ is a sequence such that $h_t$ is a $\F_t$-measurable function from $\Omega$ to $X$ for all $t$. We sometimes call measurable functions \textit{random variables}. The following technical assumption, taken from Kochov \cite{Kochov}, restricts the set of stochastic processes that we consider. 

\medskip 

\noindent \textit{Assumption.} Stochastic processes are \textit{bounded} and \textit{finite}. Boundedness means that for each act $h$ there exists a compact set $K_h\subset X$ such that $\cup_t h_t(\Omega)\subset K_h$. Finiteness means that for each act $h$ there is a finitely generated algebra $\A_{h }\subset \F$ such that $h_t$ is $\A_{h }$-measurable for all $t\in \N$.

\medskip

\noindent We denote by $\hh$ the set of bounded and finite stochastic processes
\begin{multline*}
\hh:=\{ h  =(h_t)_{t\in \N}| h_t:\Omega\rightarrow X, h_t \text{ is } \F_t\text{-measurable } \forall t   \text{ and } h  \text{ is finite and bounded} \}.
\end{multline*}
A sequence $h  \in \hh$ will be called  \textit{act}. 
The set $\D\subset\hh$ denotes the set of \textit{deterministic acts}: we have $ d \in \D$ if and only if $d_t$ is a constant random variable for all $t\in \N$ and there exists a compact set $K_d\subseteq X$ such that $d_t\in K_d$ for all $t\geq 0$. As usual we identify $\D$ with a subset of $X^{\infty}$. For example, if $X=\R$ then $\D=l^{\infty}$, the set of real-valued bounded sequences. When  $x\in X$ and $d\in \D$, $(x,d)$ denotes the act $(x, d_0,d_1,\dots)$. Obviously the procedure can be repeated as in $(x,y,d)=(x,y,d_0,d_1,\dots)$ and so on.

A \textit{(normalized) capacity} $v$ on the  measurable space $(\Omega,\F)$ is a set function $v:\F \mapsto \mathbb{R}$ such that $v(\emptyset)=0,\,v(\Omega)=1$ and for all $A,B \in \F,\, A\subset B \Rightarrow v(A)\leq v(B)$.
Given a capacity $v$ on $(\Omega,\F)$, the \textit{Choquet integral} of a real-valued, bounded, $\F$-measurable function $f:\Omega\rightarrow \R$ with respect to $v$ is defined as
$$
\int_{\Omega}{f\,dv}:=\int_{-\infty}^0{(v(\{f\geq t\})-1)\,dt} + \int_0^{+\infty}{v(\{f\geq t\})\,dt},
$$
where $\{f\geq t\}=\{\omega\in\Omega | f(\omega)\geq t\}$. A capacity $v: \F \mapsto \mathbb{R}$ is \textit{convex} if, for all $A,B\in \F$, $v(A\cup B)+v(A\cap B)\geq v(A)+v(B)$.  A \textit{(finitely additive) probability}  $P:\F \mapsto \mathbb{R}$ is a capacity such that  $A\cap B=\emptyset$ implies $P(A\cup B)=P(A)+P(B)$. The \emph{core} of a capacity $v$ is defined by $core(v)=\{P| P \text{ is a probability s.t. } P(A)\geq v(A)\,\forall A\in \F \}$. Finally, if $P$ is a probability, then we denote the integral with respect to $P$ of a real-valued, bounded, $\F$-measurable function $f$ with the usual notation for expectation $\int_{\Omega}{f\,dP}=\E_P[f]$.

\section{The deterministic setting: Discounted Utility}\label{sec:DU}

We begin characterizing (deterministic) discounted utility through a set of axioms in the spirit of Koopmans \cite{Koop}, \cite{Koop72}. Koopmans' axioms are well known in the literature. Yet, as Bleichrodt, Rohde and Wakker \cite{Ble} put it ``his analysis is obscured by technical digressions and several inaccuracies. It is, for instance, never stated what the domain of preference is, i.e. which consumption programs are considered, and there is an unanticipated implication of bounded utility". 
Bleichrodt \textit{et al.} \cite{Ble} correct Koopmans' paper and greatly extend the domain of application of his model. In this section we slightly modify Bleichrodt \textit{et al.} \cite{Ble} in order to provide an axiomatization that is closer to the one of Koopmans and that  naturally generalizes to our framework with uncertainty. The reader who is familiar with Koopmans \cite{Koop}, \cite{Koop72} and Bleichrodt \textit{et al.} \cite{Ble}  can skip this section and proceed directly to the more general Section \ref{sec:DEU}.

 Let $\succsim$ be a complete and transitive binary relation over the set of deterministic acts $\D$. We denote $\sim$ and $\succ$ its symmetric and asymmetric part.  This relation represents the preferences of a DM deciding at time $t=0$ among deterministic $X$-valued streams. Given the preference relation  $\succsim$ over $\D$ and $x,y\in X$, we write $x\succsim y$ if $\left(  x,x,...\right)  \succsim\left(  y,y,...\right)$. Thus, preferences over $X$ agree with preferences over constant deterministic acts.

Koopmans calls his axioms \textit{postulates} and we keep the same terminology here. The postulates in this section parallel the one in Koopmans \cite{Koop72}. As Koopmans himself notices, see p. 81 in \cite{Koop72}, the postulates are not logically independent, see also Footnote \ref{fn:cont}.

The first postulate  is technical, it says that the preference relation $\succsim$ is continuous with respect to the product topology. This postulate differs from the continuity requirement in Koopmans \cite{Koop72}. He assumed $X$ to be metrizable (a stronger assumption than the one made here) and the preference relation $\succsim$ to be continuous with respect to the sup-norm topology over $\D$.\footnote{\label{fn:cont}Continuity with respect to the product topology is a stronger requirement than continuity with respect to the sup-norm topology. However, as notice in Kochov \cite{Kochov}, this allows us to drop the metrizability assumption and  it makes P.2 and P.5 redundant. This is proved formally in Proposition \ref{prop:Koop_nest}. Axiom P.2 and P.5 will be explicitly dropped in Section \ref{sec:DEU}.}

\medskip
\noindent  P.1  (\textit{Continuity}) For all compact sets $K\subset X$ and for all deterministic acts  $d'\in \D$, the sets $\{d\in K^{\infty}| d\succsim d'\}$ and $\{d\in K^{\infty}| d'\succsim d\}$ are closed  in the product topology over $K^{\infty}$.
\medskip

Postulate P.2 says that period 0 is sensitive. This implies that $\succsim$ has a non empty strict part and avoid the cases in which preferences are determined only by the tail behavior of sequences. Formally,

\medskip
\noindent P.2 (\textit{Sensitivity}) There exist $x,y\in X$, $ d\in \D$ such that $(x,d)\succ (y,d)$.
\medskip

The third postulate, Time Separability, imposes that preferences between today and tomorrow are not affected by consumption starting from period three. If two deterministic streams are equal starting from the third period onward, then their ranking does not depend on the continuations.

\medskip
\noindent  P.3   (\textit{Time Separability}) For all $x,y,x',y'\in X$ and $d,d'\in \D$, $(x,y,d)\succsim (x',y',d)$ if and only if $(x,y,d')\succsim (x',y',d')$.
\medskip

The next postulate, called K-Stationarity (for Koopmans' stationarity), plays a key role in our analysis. Let us state it formally.

\medskip
\noindent P.4 (\textit{K-Stationarity}) For all $x\in X$ and $d, d' \in \D$,  $d \succsim d' $ if and only if $(x,d )\succsim (x,d' )$.
\medskip

P.4 asserts the following. Suppose that a DM prefers a deterministic stream $d$ to $d'$. Postpone all  elements of the two sequences one period ahead ($d_0$ and $d_0'$ will be consumed in period 1, $d_1$ and $d_1'$ will be consumed in period 2 and so on) and introduce a common  period-zero consumption bundle $x$. Then she will prefer $(x,d)$ to $(x,d')$. The same reasoning can be done the other way around. If two streams have a common period zero consumption bundle, then it can be dropped, all the bundles can be shifted one period back, and preferences will not change.

Finally, K-Monotonicity (for Koopmans' monotonicity)  says, roughly, that more is preferred to less. Consider two deterministic streams $d,d'\in \D$ and suppose that, for every period $t$, the DM prefers the bundle $d_t$ rather than $d_t'$. Then K-Monotonicity implies that $d$ is preferred to $d'$. If moreover preferences are strict in at least one period of time $t$, then $d\succ d'$.

\medskip
\noindent  P.5  (\textit{K-Monotonicity}) Let $d,d'\in \D$. If $d_t\succsim d_t'$ for all $t$, then $d\succsim d'$; if moreover $d_t\succ d_t'$ for some $t$ then $d\succ d'$.
\medskip


We state now the characterization of  discounted utility.

\begin{prop}\label{prop:DU}
A preference relation $\succsim$ over $\D$ satisfies P.1-5 if and only if there exists a continuous function $u:X\rightarrow \R$ and a discount factor $\beta\in(0,1)$ such that $\succsim$ is represented by
$$
V(d)=\sum_{t=0}^{\infty}{\beta^t u(d_t)}.
$$
Moreover $\beta$ is unique and $u$ is unique up to a positive affine transformation.
\end{prop}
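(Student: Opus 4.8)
The ``if'' direction is a routine verification that $V(d)=\sum_{t}\beta^t u(d_t)$ --- a well-defined real number because acts are bounded and $\beta\in(0,1)$ --- satisfies P.1--P.5, so I would concentrate on ``only if''. The first move is to produce a continuous numerical representation. Since $X$ is connected and separable, the countable product $\D\cong X^{\N}$ with the product topology is again connected and separable, so P.1 together with Debreu's utility theorem yields a continuous $V:\D\to\R$ representing $\succsim$. For a compact $K\subseteq X$ the set $K^{\infty}$ is compact, hence $V$ is bounded on it; in particular $V$ is bounded on each $K_d^{\infty}$, which will be needed to control tails. Finally $u(x):=V(x,x,\dots)$ is continuous, represents $\succsim$ over $X$, and is nonconstant by P.2.

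The heart of the argument, which is where P.3 and P.4 enter, follows Koopmans. K-Stationarity (P.4) says that for every $x$ the ranking of $(x,\delta)$ versus $(x,\delta')$ coincides with that of $\delta$ versus $\delta'$: this makes the block of continuation coordinates $\{1,2,\dots\}$ preferentially separable from period $0$ \emph{and} makes the rule by which continuations are aggregated the same at every date. Time Separability (P.3) makes the first two periods, as a block, separable from the rest. Feeding these separabilities into the continuous representation (connectedness and continuity guarantee that the separable factors are themselves continuously representable, and P.5 gives strict monotonicity) yields a continuous aggregator $W$, strictly increasing in each argument, with
\[
 V(d)=W\bigl(u(d_0),\,V(\sigma d)\bigr),\qquad \sigma d:=(d_1,d_2,\dots),
\]
the \emph{same} $W$ at every stage. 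Unfolding once more, $V(d)=W\bigl(u(d_0),W(u(d_1),V(\sigma^2 d))\bigr)$, and matching this against the representation of the separable block $\{0,1\}$ gives a bisymmetry-type functional equation for $W$; its continuous, strictly monotone solutions are affine, so after rescaling $u$ and $V$ one obtains $W(a,z)=a+\beta z$ for a constant $\beta$.

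It remains to pin down $\beta$ and pass to the series. Applying the recursion to a constant stream gives $V(x,x,\dots)=u(x)+\beta\,V(x,x,\dots)$, hence $V(x,x,\dots)=u(x)/(1-\beta)$; since $u$ is nonconstant and represents $\succsim$ over $X$ we need $1/(1-\beta)>0$, i.e.\ $\beta<1$, while strict monotonicity of $W$ in its second argument forces $\beta>0$. Iterating yields $V(d)=\sum_{t=0}^{n-1}\beta^t u(d_t)+\beta^n V(\sigma^n d)$, and because $V$ is bounded on $K_d^{\infty}$ and $0<\beta<1$ the remainder vanishes as $n\to\infty$, giving $V(d)=\sum_{t=0}^{\infty}\beta^t u(d_t)$ with the series absolutely convergent. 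Uniqueness of $\beta$ and of $u$ up to a positive affine transformation is inherited from the essential uniqueness of the separable representation of the previous step together with the fixed-point equation.

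The main obstacle is the middle step: converting P.3 and P.4 into a \emph{well-defined, time-invariant} aggregator and then proving that aggregator is \emph{additive} (the separability bookkeeping plus the functional equation), and --- in the last step --- deriving $\beta<1$, which is precisely the ``unanticipated implication of bounded utility'' that Bleichrodt \emph{et al.}~\cite{Ble} had to add to Koopmans~\cite{Koop}. An alternative to carrying all this out from scratch is to verify that P.1--P.5 are equivalent to a product-topology variant of the axioms in Bleichrodt \emph{et al.}~\cite{Ble} and then invoke their representation theorem directly.
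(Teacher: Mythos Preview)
Your main line of argument---build a continuous representation via Debreu, extract a time-invariant aggregator $W$ from P.3--P.4, solve the bisymmetry/Pexider equation to force $W$ affine, then pin down $\beta\in(0,1)$ and sum the series---is the classical Koopmans route and is sound in outline. The paper, however, does \emph{not} do this: it takes exactly the shortcut you mention only as an alternative in your last sentence. Concretely, the paper shows that P.1--P.5 imply the axioms of Bleichrodt, Rohde and Wakker~\cite{Ble} (Ultimate Continuity and Constant Equivalence, with Tail Robustness replaced by K-Monotonicity via their Observation~3 since programs in $\D$ are bounded), and then invokes their representation theorem. So the paper's proof reduces to two short verifications---UC from P.1 by restricting to the finite-dimensional sets $X_T$, and CE from P.1 plus P.5 via a connectedness argument on $co(K_d)$---whereas your approach rebuilds the aggregator and the functional-equation step from scratch. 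Your route is self-contained and illuminates \emph{why} the discount is geometric; the paper's route is much shorter but outsources the hard work to~\cite{Ble}.

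One small correction: $\D$ is \emph{not} $X^{\N}$ unless $X$ is compact---by definition every $d\in\D$ lives in some $K_d^{\infty}$ with $K_d$ compact---so the Debreu step should be carried out on each $K^{\infty}$ (which is where P.1 actually gives you closed sections) and then the local representations patched, rather than appealing to separability of the full product.
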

In order to prove Proposition \ref{prop:DU}, we will show that postulates P.1-5 imply the axioms of Bleichrodt \textit{et al.} \cite{Ble}. This will allow us to use their main result in order to get  discounted utility. The formal proof can be found in the Appendix. 

\section{Uncertainty and Discounted Expected Utility}\label{sec:DEU}


We introduce now uncertainty in the model of Section \ref{sec:DU}. Let $\succsim$ be a non-trivial (i.e. with a non-empty strict part $\succ$), complete and transitive binary relation over $\hh$. Recall that an act $h\in \hh$ is a bounded and finite stochastic process: at time $t=0$ Nature chooses a state $\omega\in \Omega$ (hidden to the DM at time $t=0$ and at any finite time $t$) and the DM receives $h_t(\omega)$ in period $t$. The relation $\succsim$ represents the preferences of the DM over acts.\footnote{Notice that the DM is choosing only at time $t=0$. Repeated choices are outside the scope of this paper as we want to stick to Koopmans \cite{Koop72} original framework. There is an extensive literature about repeated choices, recursive expected utility and dynamic consistency. A limited list of fundamental works in this area includes (in chronological order) Strotz \cite{Strotz}, Koopmans \cite{Koop},  Kreps and Porteus \cite{KP}, Epstein and Zin \cite{EZ}, Skiadas \cite{Skiadas}, Epstein and Schneider \cite{ES} and Bommier, Kochov and Le Grand \cite{BommKochLeGrand17}. }

The following three axioms  are taken  form Kochov \cite{Kochov}. We refer to them as \textit{basic axioms}.

\medskip
\noindent  \textsc{Continuity} (C) For all compact sets $K\subset X$ and for all acts $h \in\hh$ the sets $\{ d   \in K^{\infty}| d   \succsim  h  \}$ and $\{ d   \in K^{\infty}| h  \succsim  d   \}$ are closed in the product topology over $K^{\infty}$. 
\medskip

When $X$ is a subset of $\R^n$, continuity with respect to the product topology is a stronger requirement than continuity with respect to the sup-norm topology. However, axiom (C) (together with the other axioms and non-triviality of the preference relation) makes it possible to drop P.2 and P.5, see Footnote \ref{fn:cont}. Notice that (C) implies P.1.

The next axiom, Time Separability, is exactly the same as P.3 in Section \ref{sec:DU}. Several authors drop this axiom in order to get endogenous discounting, see for instance Epstein \cite{Epstein83} and Bommier, Kochov and Le Grand \cite{BommKochLeGrand19}. We keep Time Separability in order to obtain a constant exponential discount factor.

\medskip
\noindent  \textsc{Time Separability} (TS)  For all $x,y,x',y'\in X$ and $d,d'\in \D$, $(x,y,d)\succsim (x',y',d)$ if and only if $(x,y,d')\succsim (x',y',d')$.
\medskip

Before introducing Monotonicity we need a piece of notation. Let $h\in\hh$ and $\omega\in\Omega$, then $h(\omega)$ denotes the (deterministic) sequence $(h_0(\omega),h_1(\omega),\dots)\in \D$.

\medskip
\noindent \textsc{Monotonicity} (M) For all $h,g\in \hh$, if $ h  (\omega)\succsim  g   (\omega)$ $\forall \omega\in \Omega$ then $ h  \succsim  g  $.
\medskip

Under uncertainty, axiom (M) has a different interpretation than P.5. Suppose that the DM prefers the deterministic act $h(\omega)$ rather that the deterministic act $g(\omega)$  for every possible choice $\omega$ of Nature. Than (M) says that she should prefer $h$ to $g$.

The axiom Strong Stationarity represents the first important novelty of the paper. It generalizes K-Stationarity and, together with the other basic axioms, it allows us to characterize discounted expected utility.

\medskip
\noindent \textsc{Strong Stationarity} (SS) For all $t\in \N$, $f,g,h\in\hh$,
\begin{multline*}
 (h_0,\dots,h_{t-1},f_t,f_{t+1},\dots)\succsim (h_0,\dots,h_{t-1},g_t,g_{t+1},\dots) \Leftrightarrow \\
  (h_0,\dots,h_{t-1},h_t,f_t,f_{t+1},\dots)\succsim (h_0,\dots,h_{t-1},h_t,g_t,g_{t+1},\dots)
\end{multline*}

Axiom (SS) has the same behavioral interpretation of K-Stationarity. Consider two acts that are equal until period $t$ and suppose that the first act is preferred to the second one. Then we can shift all variables one period ahead starting from period $t$, introduce a common random  variable $h_t$ in period $t$ and the DM's preferences will not change. As for K-Stationarity, the reasoning can be done the other way around, i.e. the variable $h_t$ can be dropped, all the other variables can be shifted one period back, and  preferences will not change. It is not difficult to see that (SS) implies K-Stationarity (just consider (SS) over $\D$).  A caveat is in order, since acts are adapted to the filtration $(\F_t)_t$, one needs to be careful not to shift back a $\F_t$-measurable variable to period $t-1$, as this variable may not be $\F_{t-1}$-measurable. Measurability is also the reason why we need to state (SS) for all periods $t$ and not only for period 0. If one states the axiom only for shifts starting at period 0, it would be possible to insert only random variables that are measurable with respect to $\F_0$, i.e. constant random variables. We will elaborate more on this in Section \ref{sec:UncAv} when we present a very weak stationairty axiom due to Kochov \cite{Kochov}. It is important to notice that (SS) plays the role that the independence axiom has in the Anscombe-Aumann model. Shifting variables one period ahead (or one period back) and inserting (or removing) a variable $h_t$ can be done \textit{independently} of the choice of the sequences and the variable $h_t$. We will come back on this in Section \ref{sec:Choquet-DEU}, after introducing the axiom of Comonotonic Stationarity.

The following proposition formally shows that the set of axioms presented here implies the axioms of Section \ref{sec:DU}.

\begin{prop}\label{prop:Koop_nest}
$(C)$, $(TS)$, $(M)$ and $(SS)$ imply P.1-5.
\end{prop}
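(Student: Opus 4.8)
The plan is to verify P.1--P.5 in turn; P.1, P.3 and P.4 are immediate and all the real content is in P.2 and P.5. Postulate P.1 is the case of (C) in which the reference act is itself a deterministic act $d'\in\D$. Postulate P.3 is literally (TS). Postulate P.4 (K-Stationarity) is (SS) at $t=0$ restricted to $\D$: with $f=d$, $g=d'$ deterministic and $h_0=x$ a constant outcome, (SS) reads $d\succsim d'\Leftrightarrow(x,d)\succsim(x,d')$. For P.2 (Sensitivity) I would first use (M) to carry non-triviality down to $\D$: since $\succsim$ is non-trivial there are $f\succ g$ in $\hh$, and the contrapositive of (M) produces a state $\omega^*$ with $f(\omega^*)\succ g(\omega^*)$, i.e.\ two \emph{deterministic} acts $d\succ d'$. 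Now interpolate between them: set $e^{0}:=d'$ and $e^{k}:=(d_0,\dots,d_{k-1},d'_k,d'_{k+1},\dots)$, so that $e^{k}\to d$ in the product topology, all inside the compact block $K^\infty$ with $K:=K_d\cup K_{d'}$. If every one-coordinate change from $e^{k}$ to $e^{k+1}$ were an indifference, transitivity together with continuity (P.1) would force $d\sim d'$, a contradiction; hence some coordinate-$k_0$ swap is strict, and stripping the common length-$k_0$ prefix via K-Stationarity rewrites it as $(x,\tau)\succ(y,\tau)$ (or the reverse), which is exactly P.2.

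Postulate P.5 (K-Monotonicity) is the heart of the matter, and I would prove it from P.1, P.3, P.4 alone. Writing $\bar z:=(z,z,\dots)$ for the constant stream, the key step is to pin down the ``first-coordinate order'': using Time Separability to absorb the tail and K-Stationarity to add or remove a common outcome, one shows that the relation $a\,R\,b:\Leftrightarrow(a,\bar z)\succsim(b,\bar z)$ is independent of $z$ and that $(a,e)\succsim(b,e)\Leftrightarrow a\,R\,b$ for every tail $e\in\D$. The crux is that $R$ coincides with the outcome order $\succsim$ restricted to $X$. Suppose not, say $\bar a\succsim\bar b$ but $b\,R\,a$ while not $a\,R\,b$; by K-Stationarity the ``delayed'' streams $(b,\dots,b,a,a,\dots)$ with $k$ initial copies of $b$ form a $\succsim$-chain lying weakly above $\bar a$ and converging in the product topology to $\bar b$, so continuity applied on the compact block $\{a,b\}^\infty$ forces $\bar b\succsim\bar a$, hence $\bar a\sim\bar b$; K-Stationarity then gives $(b,\bar a)\sim(b,\bar b)=\bar b\sim\bar a$, contradicting that $b\,R\,a$ (strictly) yields $(b,\bar a)\succ\bar a$. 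The remaining cases are symmetric. Having $R=\succsim|_X$, coordinate-$0$ monotonicity $\bar a\succsim\bar b\Rightarrow(a,e)\succsim(b,e)$ holds together with its strict counterpart, and by K-Stationarity the same holds in every coordinate. I would then conclude by telescoping: given $d_t\succsim d'_t$ for all $t$, change $d'$ into $d$ one coordinate at a time, each step being a weak improvement, so $(d_0,\dots,d_n,d'_{n+1},d'_{n+2},\dots)\succsim d'$ for every $n$; passing to the limit by continuity gives $d\succsim d'$, and if moreover $d_s\succ d'_s$ for some $s$, that step is strict and the same limiting argument delivers $d\succ d'$.

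I expect the main obstacle to be this last step --- recognising that K-Monotonicity is \emph{forced} by continuity, separability and stationarity (Koopmans kept it as a separate postulate only because his continuity was the weaker sup-norm one), and in particular getting the ``$R=\succsim|_X$'' argument right: one must verify that the delayed streams genuinely converge to $\bar b$ in the product topology and that all the contour sets invoked lie in, and are closed in, an appropriate compact block so that P.1 really applies, and one must run the matching case analysis ($\bar a\succ\bar b$ versus $\bar a\sim\bar b$, and the symmetric reversal). A secondary subtlety is the bookkeeping in P.2, making sure that the single strict coordinate swap produced by the continuity argument reduces, via K-Stationarity, to precisely the form required by P.2.
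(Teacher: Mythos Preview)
Your proposal is correct and follows the same skeleton as the paper: P.1, P.3, P.4 are immediate, and P.5 is obtained by first establishing a coordinate-$0$ monotonicity lemma (the paper's Claims~1--2, your $R=\succsim|_X$) and then telescoping under continuity. The differences are in the supporting work. For P.2 the paper simply cites Lemma~5 of Kochov, whereas you give a self-contained argument, using (M) to transfer non-triviality down to $\D$ and an interpolation-plus-continuity step to locate a single strict coordinate swap. For the coordinate-$0$ lemma, the paper claims that (C) and (SS) alone suffice and argues by direct (SS)-shifts on streams of the form $(x,d_0,\dots,d_t,x,x,\dots)$; you instead also invoke (TS), first showing tail-independence of $(a,e)\succsim(b,e)$ via the clean manoeuvre ``add a common prefix by K-Stationarity, change the tail from period~2 by (TS), remove the prefix'', and then identify $R$ with $\succsim|_X$ through your delayed-stream continuity argument on $\{a,b\}^\infty$. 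Your route spends one more hypothesis but is arguably more transparent, and since (TS) is among the assumptions this is harmless. One small point worth tightening: in the strict half of P.5, a limit of strict preferences is only weak, so make explicit that continuity gives $d\succsim e^{s+1}$ (limit of the telescoped sequence past step $s$) while the strict step gives $e^{s+1}\succ d'$, whence $d\succ d'$.
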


We are ready to state the main result of this section. The basic axioms (C), (M) and (TS) together with (SS) deliver the discounted expected utility representation in (\ref{eq:DEU}).

\begin{myth}\label{th:EU}
A preference relation $\succsim$ over $\hh$ satisfies (C), (M), (TS) and (SS)  if and only if there exists a probability $P:\F\rightarrow[0,1]$, a continuous utility function $u:X\rightarrow \R$ and a discount factor $\beta\in(0,1)$ such that   $\succsim$ is represented by
$$
V(h)=\E_P\left[\sum_{t=0}^{\infty}\beta^t u(h_t)\right]
$$ 
Moreover $P$ and $\beta$ are unique and $u$ is unique up to a positive affine transformation.
\end{myth}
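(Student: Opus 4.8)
The plan is to leverage Proposition \ref{prop:Koop_nest} and Proposition \ref{prop:DU} to obtain a discounted-utility representation on the deterministic cone $\D$, and then to bootstrap this into a representation on all of $\hh$ using the Choquet/expected-utility machinery together with Strong Stationarity. Concretely, I would first apply Proposition \ref{prop:Koop_nest} to conclude that the restriction $\succsim|_\D$ satisfies P.1--5, so by Proposition \ref{prop:DU} there exist a continuous $u:X\to\R$ and $\beta\in(0,1)$ with $V(d)=\sum_{t\ge0}\beta^t u(d_t)$ representing $\succsim$ on $\D$. I would then extend the real-valued index to a single act by defining, for $h\in\hh$, the ``certainty-equivalent'' deterministic stream: by (C), (M) and connectedness of $X$ one shows every $h$ is indifferent to some $d\in\D$, and assigns $V(h):=V(d)$. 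The content of the theorem is to show this $V$ has the asserted integral form.

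The central step is to identify the functional $\omega\mapsto \sum_t\beta^t u(h_t(\omega))$ and show $V$ is its expectation under a single finitely additive probability $P$. For a fixed finitely generated algebra $\A_h$ with atoms $E_1,\dots,E_k$, an act measurable w.r.t. $\A_h$ is essentially a vector of deterministic continuations $(d^1,\dots,d^k)$, one on each atom. I would argue that (SS) forces the evaluation of such an act to depend only on the vector of real numbers $(V(d^1),\dots,V(d^k))$ — this is exactly the role (SS) plays as a surrogate for the independence axiom, as the paper emphasizes. Indeed, (SS) applied at successive periods lets one replace any deterministic ``prefix'' consistently across the two sides of an indifference, so that comparisons of acts reduce to comparisons of their period-wise discounted-utility profiles; combined with (M) and (C) this yields a continuous, monotone aggregator $I:\R^k\to\R$ on the range of these profiles, with $V(h)=I(V\circ h)$. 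Translation/normalization properties inherited from (SS) (the ``insert a common $h_t$'' operation corresponds to an additive shift by $\beta^t\cdot(\text{something})$ after reindexing, and in particular constant acts go to their value) pin down $I$ to be \emph{additive} across a partition — the strong, non-comonotonic version of Stationarity gives additivity rather than mere comonotonic additivity. By the classical Schmeidler-type argument (or simply solving the additive Cauchy equation on a connected domain with monotonicity), $I(z_1,\dots,z_k)=\sum_i P(E_i)z_i$ for nonnegative weights summing to one, defining $P$ on $\A_h$. Consistency of these $P$'s across refinements of algebras (again forced by (SS)/(M)) produces a single finitely additive probability $P$ on $\F=\cup_t\F_t$, giving $V(h)=\E_P[\sum_t\beta^t u(h_t)]$ for every finite act $h$, hence on all of $\hh$.

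I would close with uniqueness: $\beta$ and the affine class of $u$ are already unique by Proposition \ref{prop:DU} applied to $\D$; uniqueness of $P$ follows because, once $u$ (nonconstant) and $\beta$ are fixed, the value of $V$ on two-period acts of the form $(x \text{ on } A,\ y \text{ on } A^c,\ \bar z,\bar z,\dots)$ determines $P(A)$ for every $A\in\F$.

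\medskip
The step I expect to be the main obstacle is establishing that (SS) genuinely delivers \emph{full} additivity of the aggregator $I$ (not just comonotonic additivity) and, relatedly, the measurability bookkeeping: because acts are adapted to $(\F_t)_t$, one cannot freely shift an $\F_t$-measurable variable back to period $t-1$, so the reduction of a general finite act to a ``profile of continuations'' and the propagation of the additivity identity from one algebra to a common refinement must be carried out period by period, exactly as the paper's discussion of why (SS) is stated for all $t$ anticipates. Making the passage from finitely-additive consistency on each $\A_h$ to a well-defined $P$ on all of $\F$ rigorous — and checking it is independent of the chosen algebra — is the technical heart of the argument.
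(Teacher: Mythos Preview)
Your overall strategy is sound and aligns with the paper's: obtain discounted utility on $\D$ via Propositions~\ref{prop:Koop_nest} and~\ref{prop:DU}, extend to $\hh$ by certainty equivalents, and then show that (SS) forces the aggregator over states to be \emph{additive} (not merely comonotonically additive), so that the Choquet set function collapses to a probability. The paper in fact proves Theorem~\ref{th:EU} as a two-line corollary of the proof of Theorem~\ref{th:CEU}: one reruns that proof and observes that, under (SS) rather than (CS), the key lemma (comonotonic additivity of the functional $\tilde I$ on simple functions) upgrades to full additivity, whence the capacity $v(A)=\tilde I(1_A)$ is finitely additive.

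Where your route diverges is in the bookkeeping layer. You propose to work atom-by-atom on each finitely generated algebra $\A_h$, produce weights $P(E_i)$ there, and then patch across refinements. The paper (following Kochov) avoids this entirely by working globally on the ``util-act'' space $\U=\{U\circ h:h\in\hh\}\subset B^o(\Omega,\F)$, extending $I$ to $\tilde I:B^o\to\R$ once and for all, and establishing translation invariance and $\beta$-homogeneity abstractly (Kochov's Lemmas 10--12). This buys you the consistency-across-algebras step for free and sidesteps the refinement argument you flagged as the technical heart.

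The one place your sketch is genuinely underspecified is the mechanism by which (SS) yields additivity. Your description (``insert a common $h_t$ corresponds to an additive shift\ldots solving the Cauchy equation'') is not quite how it goes. The paper's device is concrete: given $a,b,c\in B^o$, choose $n$ large enough that $\beta^n a,\beta^n b,\beta^n c$ lie in the range of single-period util acts, realize them as one-shot acts at period $n$, and use (SS) at period $n$ to insert $h_n=u^{-1}(\beta^n c)$ in front of $f_n=u^{-1}(\beta^n a)$ (resp.\ $g_n$). This turns $\tilde I(a)=\tilde I(b)$ into $\tilde I(c+\beta a)=\tilde I(c+\beta b)$; the stray $\beta$ is then removed using $\beta$-homogeneity. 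With (SS) there is no comonotonicity constraint on $c$, so one gets $\tilde I(a+c)=\tilde I(a)+\tilde I(c)$ for all $a,c$. If you try to execute your version, this time-shift-plus-$\beta$-homogeneity trick is the ingredient you will need; the measurability concern you raised is handled precisely by choosing $n\ge t_1$ so that $a,b,c$ are already $\F_n$-measurable.
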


Theorem \ref{th:EU} follows as a corollary of the more general Theorem \ref{th:CEU}, which is stated in Section~\ref{sec:Choquet-DEU}. A last remark is in order. In the literature, the functional of Theorem \ref{th:EU} is usually written 
$$
\sum_{t=0}^{\infty}\beta^t \E_P\left[u( h_t)\right]
$$
where the sum and the expected value operator are exchanged. Clearly this  cannot be done for the functionals in (\ref{eq:CEUrepresentation}) and (\ref{eq:maxmin}).
However we show in Proposition \ref{prop:exchange} that for the discounted expected utility model of Theorem \ref{th:EU} both forms are possible.

\begin{prop}\label{prop:exchange}
For all $h\in \hh$, $\E_P\left[\sum_t \beta^t u(h_t)\right]=\sum_t\beta^t \E_P\left[u(h_t)\right]$.
\end{prop}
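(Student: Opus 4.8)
The plan is to reduce the identity to linearity of expectation over \emph{finite} sums together with a passage to the limit, whose legitimacy rests on the finiteness part of the standing Assumption — so that no convergence theorem for integrals against a merely finitely additive set function is invoked.

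Fix $h\in\hh$, and let $P$, $u$, $\beta$ be as in Theorem \ref{th:EU}. First I would record two consequences of the Assumption. By boundedness there is a compact $K_h\subset X$ with $\cup_t h_t(\Omega)\subset K_h$; since $u$ is continuous, $M:=\sup_{x\in K_h}|u(x)|<\infty$, hence $|u(h_t(\omega))|\le M$ for all $t$ and $\omega$. As $\beta\in(0,1)$, the series $\sum_{t\ge 0}\beta^t u(h_t(\omega))$ converges absolutely and uniformly in $\omega$, being dominated by $M/(1-\beta)$; likewise $|\E_P[u(h_t)]|\le M$ for every $t$, so $\sum_{t\ge 0}\beta^t\E_P[u(h_t)]$ converges. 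By finiteness there is a finitely generated algebra $\A_h\subset\F$ with each $h_t$ being $\A_h$-measurable; let $B_1,\dots,B_k$ be its finitely many atoms and pick $\omega_i\in B_i$. Then $u(h_t)$ is constant on each $B_i$, equal to $u(h_t(\omega_i))$.

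For each finite $N$, since $P$ is a (finitely additive) probability and $\sum_{t=0}^N\beta^t u(h_t)$ is a simple $\A_h$-measurable function, linearity of integration against $P$ gives
\begin{equation*}
\E_P\!\left[\sum_{t=0}^N\beta^t u(h_t)\right]=\sum_{t=0}^N\beta^t\,\E_P[u(h_t)]=\sum_{i=1}^k P(B_i)\sum_{t=0}^N\beta^t u(h_t(\omega_i)).
\end{equation*}
Now let $N\to\infty$. The middle term tends to $\sum_{t\ge 0}\beta^t\E_P[u(h_t)]$ by the convergence noted above. For the right-hand term, for each of the finitely many atoms $B_i$ the inner sum converges to $\sum_{t\ge 0}\beta^t u(h_t(\omega_i))$, which is precisely the value on $B_i$ of the function $\omega\mapsto\sum_{t\ge 0}\beta^t u(h_t(\omega))$; summing the $k$ terms, $\sum_{i=1}^k P(B_i)\sum_{t=0}^N\beta^t u(h_t(\omega_i))\to\sum_{i=1}^k P(B_i)\sum_{t\ge 0}\beta^t u(h_t(\omega_i))=\E_P\bigl[\sum_{t\ge 0}\beta^t u(h_t)\bigr]$. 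Equating the two limits yields the claim.

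The only delicate point — and the reason I would phrase the argument via the fixed finite algebra $\A_h$ rather than invoke dominated convergence — is that $P$ is a priori only finitely additive, so interchanging a limit and an integral is not automatic; finiteness removes the difficulty, since all functions in sight are simple functions measurable with respect to the \emph{same} finite algebra $\A_h$, and only an interchange of a limit with a finite sum over its atoms is actually used. Beyond this bookkeeping there is no real obstacle.
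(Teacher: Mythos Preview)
Your proof is correct, and it takes a genuinely different route from the paper's. The paper establishes uniform convergence of $H_N:=\sum_{t=0}^{N}\beta^t u(h_t)$ to $H$ on $\Omega$ via the Weierstrass $M$-test, and then passes the limit through the integral using the estimate $\bigl|\int H_N\,dP-\int H\,dP\bigr|\le\int|H_N-H|\,dP<\epsilon$, citing Rao and Rao for the first inequality (valid for simple functions against a finitely additive $P$). You instead exploit the finiteness assumption head-on: fixing the atoms $B_1,\dots,B_k$ of $\A_h$, every integral in sight is literally the finite sum $\sum_{i=1}^k P(B_i)\cdot(\text{value on }B_i)$, so the only interchange needed is that of a limit with a $k$-term sum of reals. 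Your argument is more elementary and entirely self-contained, requiring no reference on finitely additive integration; the paper's uniform-convergence argument is the more classical analytic one and would still go through if the finiteness assumption were relaxed (as long as uniform convergence and a suitable bound on $\int|H_N-H|\,dP$ are available). Both are clean; yours makes more transparent why the issue of $P$ being only finitely additive never bites here.
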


\section{Comonotonic Stationarity and Choquet Discounted Expected Utility}\label{sec:Choquet-DEU}

While discounted expected utility, axiomatized in Theorem \ref{th:EU}, is widely used in economic applications, we argue that its behavioral foundations are not exempt from criticisms. Example \ref{ex:como_stat_axiom} below, already mentioned in the Introduction, shows that (SS) may be too requiring (this is why we call this axiom \textit{Strong} Stationarity) as it neglects possible hedging considerations made by the DM. 

\begin{example}\label{ex:como_stat_axiom} Let $X$ be an interval of $\R$. Acts are  interpreted as stochastic flows of income. Consider an event $A\in \F_t$ (this was the event ``Democrats win the elections'' in the Introduction) and two acts $f\sim g$, with  $f=(d_0,\dots,d_{t-1},f_t,d_{t+1},\dots)$ and $g  =(d_0,\dots,d_{t-1},g_t,d_{t+1},\dots)$. These acts are constant in every period except in period $t$ in which they are defined by
$$
f_t(\omega)=\begin{cases}
10\$ & \text{ if } \omega\in A\\
0\$ & \text{ if } \omega\in A^c
\end{cases}
\text{ and }
g_t(\omega)=\begin{cases}
0\$ & \text{ if } \omega\in A\\
10\$ & \text{ if } \omega\in A^c.
\end{cases}
$$
If $\succsim$ has the discounted expected utility  representation of Theorem \ref{th:EU} (with $u(0)=0$) then  $f\sim g$ implies $\beta^t \E_P[u(f_t)]=\beta^t \E_P[u(g_t)]$ and hence $P(A)=P(A^c)=\frac{1}{2}$,  i.e. being indifferent reveals that the DM thinks that the two events $A$ and $A^c$ are equally likely.\\
Consider now an act $h \in \hh$ such that 
 $$
h_t(\omega)=\begin{cases}
0\$ & \text{ if } \omega\in A\\
7\$ & \text{ if } \omega\in A^c.
\end{cases}
$$
If (SS) holds then one obtains
$$
(d_0,\dots,d_{t-1},h_t,f_t,d_{t+1},\dots)\sim (d_0,\dots,d_{t-1},h_t,g_t,d_{t+1},\dots).
$$
Of course, indifference is obvious if preference are of the discounted expected utility type.

However, as we have argued in the introduction, deducing indifference in this latter situation from $f\sim g$ is not  straightforward. We argue that we may actually observe
$$
(d_0,\dots,d_{t-1},h_t,f_t,d_{t+1},\dots)\succ (d_0,\dots,d_{t-1},h_t,g_t,d_{t+1},\dots).
$$
This happens because in case of bad luck, formally for $\omega\in A^c$, the act on the right of the preference relation makes the DM ``poor" for 2 consecutive dates (she gets 0\$ in $t$ and 0\$ in $t+1$). Whereas by choosing the act on the left she can be sure that she will be``rich" in at least one period. Introducing $h_t$ in front of $f_t$ gives a \textit{temporal hedge} to the DM against any choice of nature. If one is willing to take into account DMs' preferences for temporal hedging, axiom (SS) should be weakened.

\end{example}

Recall that two random variables $f,g$ from $\Omega$ to $X$ are \textit{comonotonic} if there is no $\omega$ and $\omega'$ in $\Omega$ such that $f(\omega)\succ f(\omega')$ and $g(\omega')\succ g(\omega)$.\footnote{Preferences over $X$ are defined in Section \ref{sec:DU}.} Notice that a constant random variable is comonotonic with any other random variable. If $X$ is a convex interval of $\R$, two random variables $f,g$ are comonotonic if for all $\omega,\omega'\in \Omega$, $\left[f(\omega)-f(\omega')\right]\cdot\left[g(\omega)-g(\omega')\right]\geq 0$ (here we implicitly assume that more money is preferred to less money). It is not difficult to check that in Example \ref{ex:como_stat_axiom} the variables $h_t$ and $g_t$ are comonotonic, whereas $h_t$ and $f_t$ are not. Therefore, putting $h_t$ and $f_t$ one after the other can lead to some hedging considerations. 

A DM can  \textit{temporally hedge} against  nature's choices whenever two consecutive random variables are not comonotonic. On the other hand, two consecutive comonotonic random variables remove any hedging consideration that could be done by the agent. 

The second and important novelty of our paper is represented by the axiom  Comonotonic Stationarity. This axiom weakens (SS) by taking into account temporal hedging. The idea is simply to restrict the set of acts on which (SS) has a bite. The formal statement follows.

\medskip
\noindent \textsc{Comonotonic Stationarity} (CS) For all $t\in \N$, $d  \in \D$ and  $f   ,g  ,h \in\hh$ such that $h_t$ is comonotonic with $f_i$ and $g_i$ for all $i\geq t$, 
\begin{multline*}
 (d_0,\dots,d_{t-1},f_t,f_{t+1},\dots)\succsim (d_0,\dots,d_{t-1},g_t,g_{t+1},\dots) \Leftrightarrow \\
  (d_0,\dots,d_{t-1},h_t,f_t,f_{t+1},\dots)\succsim (d_0,\dots,d_{t-1},h_t,g_t,g_{t+1},\dots)
\end{multline*}

 Axiom (CS) allows the same type of shifts as (SS) only when the random variable $h_t$, inserted in period $t$, is comonotonic with \textit{all} random variables after period $t$ for both acts $f   $ and $g$.  Comonotonic random variables do not allow for hedging between different periods of time. Therefore (CS) generalizes (SS) (in the sense that (SS) implies (CS)) limiting its range of action. Shifts can be performed only when no hedging considerations can be done. For instance the preferences described in Example \ref{ex:como_stat_axiom} are admissible under (CS).
 
It is interesting to stress the conceptual similarity of axiom (CS) with the axiom Comonotonic Independence of Schmeidler \cite{Schmeidler2}. Comonotonic Independence restricts the classical Independence axiom of expected utility to comonotonic acts. In the Anscombe and Aumann \cite{AA} (atemporal) framework, acts are functions from states of the world to lotteries over $X$. The Independence axiom says that for any three acts $f,g$ and $h$ and any mixing weight $\alpha\in[0,1]$, $f\succsim g$ if and only if $\alpha f+(1-\alpha)h\succsim \alpha g+(1-\alpha)h$. Comonotonic Independence retains this property only when the act $h$ is comonotonic with both $f$ and $g$, see Schmeidler \cite{Schmeidler2}. In this latter case, no hedging can occur when $h$ is mixed with $f$ or $g$. In our framework, hedging is not achieved by probability mixing but through ``time mixing''. Hence (SS) plays the role of Independence while (CS) the one of Comonotonic Independence.

The following is the main result of this section.

\begin{myth}\label{th:CEU}
A preference relation $\succsim$ over $\hh$ satisfies (C), (M), (TS) and (CS) if and only if there exists a capacity $v:\F\rightarrow[0,1]$, a continuous  utility function $u:X\rightarrow \R$ and a discount factor $\beta\in(0,1)$ such that  $\succsim$ is represented by
$$
V(h)=\int \sum_{t=0}^{\infty} \beta^t u( h_t) dv.
$$ 
Moreover $v$ and $\beta$ are unique and $u$ is unique up to a positive affine transformation.
\end{myth}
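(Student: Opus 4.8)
The plan is to reduce the intertemporal problem to a finite-dimensional Choquet representation problem on the algebra generated by each act, and then to glue these pieces together using uniqueness. Here is how I would proceed.

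First I would observe that the ``sufficiency'' direction (representation implies the axioms) is routine: (C), (M), (TS) are immediate, and for (CS) one notes that if $h_t$ is comonotonic with all $f_i,g_i$ for $i\geq t$, then the random variable $\omega\mapsto\sum_{s}\beta^s u(f_s(\omega))$ differs from $\omega\mapsto\sum_s\beta^s u((h_0,\dots,h_{t-1},h_t,f_t,f_{t+1},\dots)_s(\omega))$ by a comonotonic additive shift (the inserted $h_t$ at date $t$ contributes $\beta^t u(h_t(\omega))$, comonotone with the tail), so that comonotonic additivity of the Choquet integral yields the equivalence. The real work is the necessity direction.

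\textbf{Step 1 (deterministic restriction).} Restricting $\succsim$ to $\D$, Proposition \ref{prop:Koop_nest} gives P.1--5, so Proposition \ref{prop:DU} yields a continuous $u:X\to\R$ and $\beta\in(0,1)$ with $\sum_t\beta^t u(d_t)$ representing $\succsim$ on $\D$. Fix this $u$ and $\beta$; WLOG normalize $u$ so that its range is an interval containing $0$ in its interior (using a constant act as a ``zero''). Let $W(h)(\omega):=\sum_t\beta^t u(h_t(\omega))$, a bounded $\F$-measurable real random variable; by boundedness and continuity of $u$ this is well-defined. The goal is a capacity $v$ with $V(h)=\int W(h)\,dv$ representing $\succsim$.

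\textbf{Step 2 (a Schmeidler-type representation on a fixed finite algebra).} Fix a finitely generated algebra $\A\subset\F$ with atoms $E_1,\dots,E_m$, say with $E_m\in\F_0$ or ordered so the filtration structure is respected; consider acts measurable w.r.t.\ $\A$. The key reduction is: every simple $\A$-measurable random variable $\phi=\sum_j a_j\mathbf 1_{E_j}$ with values in the range of $u$ arises as $W(h)$ for some act $h$ adapted to the filtration (spread the ``consumption level'' $u^{-1}$ of the appropriate running sum across dates, using enough periods to realize each atom's target value, which is possible because we may use infinitely many periods and $\beta<1$ makes the partial-sum map onto an interval). Then $\succsim$ induces a preference on such $\phi$, and I would verify it satisfies Schmeidler's axioms: completeness/transitivity and continuity are inherited; monotonicity comes from (M); and \emph{comonotonic independence} is exactly what (CS) delivers — inserting a comonotone $h_t$ performs, at the level of $W$, the addition of a comonotone random variable (scaled by $\beta^t$), and iterating (CS) plus the deterministic calibration lets one realize arbitrary comonotone additive perturbations. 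Hence Schmeidler's theorem gives a capacity $v_\A$ on $\A$ and an affine (hence, after normalization, the identity) utility index such that $\succsim$ on $\A$-acts is represented by $\phi\mapsto\int\phi\,dv_\A$.

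\textbf{Step 3 (consistency across algebras and definition of $v$).} For $\A\subseteq\A'$ two finitely generated algebras, uniqueness in Schmeidler's theorem forces $v_{\A'}|_\A=v_\A$ (both represent the same restricted preference, with the same normalized $u$). Therefore the set function $v(A):=v_{\A}(A)$ for any finitely generated $\A\ni A$ is well-defined on $\F=\cup_t\F_t$, and is a capacity (monotonicity and normalization pass to the limit). Since every act is finite, $V(h)=\int W(h)\,dv$ is computed inside some $\A_h$ and represents $\succsim$. I also need that the Choquet integral formula is consistent with the ``per-atom'' computation — this is immediate since $v$ restricted to $\A_h$ is $v_{\A_h}$.

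\textbf{Step 4 (uniqueness).} Uniqueness of $u$ (up to positive affine transformation) and $\beta$ follows from Proposition \ref{prop:DU} applied to $\D$. Given $u,\beta$, uniqueness of $v$ follows from Schmeidler's uniqueness on each finite algebra together with Step 3: two representing capacities must agree on every finitely generated subalgebra, hence on all of $\F$.

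\textbf{Main obstacle.} The delicate point is Step 2: showing that (CS) — a stationarity/shift axiom phrased in terms of \emph{inserting a period} — translates precisely into Schmeidler's comonotonic independence for the induced preference over random cumulative utilities. One must (i) check that the map $h\mapsto W(h)$ from filtration-adapted acts onto simple random variables with values in $u(X)$ is surjective enough (this uses $\beta<1$ and the availability of infinitely many future periods, plus connectedness of $X$ so that $u(X)$ is an interval), being careful about the $\F_t$-measurability constraint when one inserts $h_t$ — which is exactly why (CS), like (SS), is stated for every $t$; and (ii) check that the comonotonicity hypothesis in (CS) (“$h_t$ comonotone with $f_i,g_i$ for all $i\ge t$”) corresponds, under $W$, to $\beta^t u(h_t(\cdot))$ being comonotone with the tail sums, and that by iterating insertions one can build an arbitrary comonotone act $k$ and obtain $f\succsim g\Leftrightarrow f+k\succsim g+k$ at the level of $W$. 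Handling measurability bookkeeping in these insertions, and confirming that the comonotonicity is preserved along the iteration, is the part that requires genuine care; everything else is an application of Schmeidler's theorem and a projective-limit gluing argument.
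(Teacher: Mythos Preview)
Your overall architecture matches the paper's: restrict to $\D$ to obtain $u$ and $\beta$, pass to util-acts $W(h)=\sum_t\beta^t u(h_t)$, extract a comonotonic property from (CS), and invoke a Schmeidler representation. Two points deserve comment.

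First, the algebra-by-algebra construction with projective gluing (Steps~2--3) is not how the paper proceeds, and is unnecessary. The paper works globally on $B^o=B^o(\Omega,\F,\R)$: following Kochov, it shows that $\U=\{W(h):h\in\hh\}$ is absorbing in $B^o$, that $I(W(h)):=U(d_h)$ (with $d_h\sim h$ a deterministic equivalent) extends uniquely to a functional $\tilde I:B^o\to\R$, and that this $\tilde I$ is translation-invariant and $\beta$-homogeneous. These properties, imported wholesale from Kochov's lemmas, are the workhorse; no gluing across subalgebras is needed, since finiteness of each act is only used to ensure $W(h)\in B^o$.

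Second, and this is a genuine gap, your Step~2 invokes ``Schmeidler's theorem'' too quickly. What (CS) delivers at the util-act level is \emph{additive} comonotonic independence: the paper shows (via a one-period construction placing $u^{-1}(\beta^n a)$, $u^{-1}(\beta^n b)$, $u^{-1}(\beta^n c)$ at a single date $n$ and $x^*$ elsewhere, combined with $\beta$-homogeneity of $\tilde I$) that $\tilde I(a)=\tilde I(b)\Leftrightarrow\tilde I(a+c)=\tilde I(b+c)$ whenever $c$ is comonotone with $a$ and $b$. This is not Schmeidler's (1989) mixture-form comonotonic independence---each insertion scales the tail by $\beta$, so iteration yields only the weights $\beta^n$---nor is it yet comonotonic additivity of $\tilde I$ as required by Schmeidler (1986, Proposition~1). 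The bridge is the translation-invariance trick: since $\tilde I(a)=\tilde I(0+\tilde I(a))$ and any comonotone $b$ is comonotone with both $a$ and the constant $\tilde I(a)$, the equivalence above gives $\tilde I(a+b)=\tilde I(\tilde I(a)+b)=\tilde I(a)+\tilde I(b)$. Your proposal neither isolates translation invariance (it is inherited from Kochov's extension, not from (CS) directly) nor performs this substitution, and that is precisely the technical content missing from Step~2. A minor point: Proposition~\ref{prop:Koop_nest} is stated for (SS); the paper remarks that the same proof goes through under (CS) because constants are comonotone with every random variable, and you should make that explicit.
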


\subsection{Uncertainty Aversion}\label{sec:UncAv}

In this section we consider the  case of uncertainty aversion.  As it will be clearer later, uncertainty aversion is akin to some form of pessimism. This justifies the name of the following axiom.

\medskip
\noindent \textsc{Pessimistic Stationarity} (PS)  The preference relation $\succsim$ satisfies (CS). Moreover for all $t\in \N$, for all $d  \in \D$ and for all $f   ,g  ,h \in\hh$ such that $h_t$ is comonotonic with $g_i$ for all $i\geq t$, 
\begin{multline*}
 (d_0,\dots,d_{t-1},f_t,f_{t+1},\dots)\succsim (d_0,\dots,d_{t-1},g_t,g_{t+1},\dots) \Rightarrow \\
  (d_0,\dots,d_{t-1},h_t,f_t,f_{t+1},\dots)\succsim (d_0,\dots,d_{t-1},h_t,g_t,g_{t+1},\dots)
\end{multline*}
It has the same interpretation of Wakker's \cite{Wakker} axiom Pessimism Independence. See also  Chateauneuf \cite{Chato}. For a pessimistic DM shifting all variables one period ahead  starting from period  $t$ and inserting a comonotonic variable in front of $g_t$ will decrease the appreciation of the act $(d_0,\dots,d_{t-1},g_t,g_{t+1},\dots)$. On the other end, the possibly non-comonotonic variable $h_t$ in front of $f_t$ will make the stream $ (d_0,\dots,d_{t-1},f_t,f_{t+1},\dots)$ more appealing since it may offer a temporal hedge. Consider again Example \ref{ex:como_stat_axiom}.

\begin{ex1cont*}
Consider the acts of Example \ref{ex:como_stat_axiom}. If DM's preferences satisfy (PS) one will actually observe the path of choices described in Example \ref{ex:como_stat_axiom}, namely
$$
f\sim g \Rightarrow (d_0,\dots,d_{t-1},h_t,f_t,d_{t+1},\dots)\succ (d_0,\dots,d_{t-1},h_t,g_t,d_{t+1},\dots).
$$
Let $\succsim$ be represented by the  Choquet discounted expected utility functional $V$ of Theorem \ref{th:CEU} (with $u(0)=0$).  First notice that $f\sim g$ implies $V(f)=\beta^t u(10) v(A)=\beta^t u(10) v(A^c)=V(g)$; and hence $v(A)=v(A^c)$.  Since $v$ it is not required to be additive, we may have $v(A)\neq\frac{1}{2}$.

Call $\hat{f}=(d_0,\dots,d_{t-1},h_t,f_t,d_{t+1},\dots)$ and $\hat{g}=(d_0,\dots,d_{t-1},h_t,g_t,d_{t+1},\dots)$ and assume for the sake of calculations $\beta u(10)>u(7)$ (the opposite case is treated similarly and yields the same conclusions). Note that $V(\hat{f})= \beta^t u(7)+[\beta^{t+1} u(10)-\beta^t u(7)]v(A) + k$ and $V(\hat{g})=[\beta^t u(7)+ \beta^{t+1} u(10)]v(A^c)+k$ where $k= \sum_{k\neq t,t+1} \beta^k u(d_k)$. Therefore $\hat{f}\succ \hat{g}$ if and only if
\begin{align*}
u(7)+[\beta u(10)- u(7)]v(A) & > [u(7)+ \beta u(10)]v(A^c) \\
 u(7)(1-v(A)) & > u(7)v(A^c)\,\,\,\,\, [\text{ since } v(A)=v(A^c)] \\
v(A)+v(A^c) &<1
\end{align*} 
and hence $v(A)<\frac{1}{2}$. If $v$ is actually convex, one gets $v(A)+v(A^c) <1$, and in this case one can observe $f\sim g$ and  $\hat{f}\succ \hat{g}$. It is not difficult to find examples in which $v$ is convex and actually $g\succ f$ and  $\hat{f}\succ \hat{g}$. As Theorem \ref{th:CEU_v_convex} shows, (PS) forces the capacity $v$ to be convex.  

\end{ex1cont*}

Notice that (PS) implies (CS) and hence Theorem \ref{th:CEU} remains valid if (CS) is replaced by (PS). Since this latter axiom is stronger, we can prove in Theorem \ref{th:CEU_v_convex} that the capacity appearing in the Choquet integral is convex.

\begin{myth}\label{th:CEU_v_convex}
A preference relation $\succsim$ over $\hh$ satisfies (C), (M), (TS) and (PS) if and only if there exists a  convex capacity $v:\F\rightarrow[0,1]$, a continuous utility function $u:X\rightarrow \R$ and a discount factor $\beta\in(0,1)$ such that $\succsim$ is represented by
$$
V(h)=\int \sum_{t=0}^{\infty} \beta^t u(h_t) dv.
$$  
Moreover $v$ and $\beta$ are unique and $u$ is unique up to a positive affine transformation.
\end{myth}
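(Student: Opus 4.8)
The plan is this. Since the first clause of (PS) is just (CS), Theorem~\ref{th:CEU} immediately supplies the Choquet discounted expected utility representation $V(h)=\int\sum_t\beta^t u(h_t)\,dv$ with a capacity $v$, together with the stated uniqueness; and conversely it gives that any such representation with $v$ a capacity satisfies (C), (M), (TS) and (CS). So the whole content of Theorem~\ref{th:CEU_v_convex} reduces to: given the representation, the ``moreover'' part of (PS) holds if and only if $v$ is convex. Throughout I would write $W(h):=\sum_t\beta^t u(h_t)$, so that $V(h)=\int W(h)\,dv$, and use the standard facts about the Choquet integral (monotonicity, positive homogeneity, translation invariance, comonotonic additivity — valid for every capacity) and the classical equivalence: $v$ is convex iff $\int(\varphi+\psi)\,dv\ge\int\varphi\,dv+\int\psi\,dv$ for all $\varphi,\psi\in B(\Omega,\F)$ (equivalently $\int\varphi\,dv=\min_{P\in core(v)}\E_P[\varphi]$).

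For the ``if'' direction I would first record an elementary lemma: if $\varphi$ is comonotonic with each $\psi_i$ and $\lambda_i\ge 0$ with $\sum_i\lambda_i\psi_i$ finite, then $\varphi$ is comonotonic with $\sum_i\lambda_i\psi_i$ (because $\varphi(\omega)>\varphi(\omega')$ forces $\psi_i(\omega)\ge\psi_i(\omega')$ for every $i$), and also that, since $u$ represents $\succsim$ on $X$, $h_t$ comonotonic with $g_i$ implies $u\circ h_t$ comonotonic with $u\circ g_i$. Now fix the data of (PS) and assume $v$ convex. Setting $c=\sum_{k<t}\beta^k u(d_k)$, $F=\sum_{i\ge t}\beta^i u(f_i)$, $G=\sum_{i\ge t}\beta^i u(g_i)$, $H=\beta^t u(h_t)$, the hypothesis reads $\int F\,dv\ge\int G\,dv$ (translation invariance) and the conclusion to be proved is $\int(H+\beta F)\,dv\ge\int(H+\beta G)\,dv$. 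By the lemma $H$ is comonotonic with $\beta G$, so $\int(H+\beta G)\,dv=\int H\,dv+\beta\int G\,dv$; by superadditivity (convex $v$) and positive homogeneity, $\int(H+\beta F)\,dv\ge\int H\,dv+\beta\int F\,dv\ge\int H\,dv+\beta\int G\,dv$, and chaining gives the conclusion. Hence (PS) holds.

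For the ``only if'' direction I would generalize, to arbitrary events, the four-cell computation already carried out after Example~\ref{ex:como_stat_axiom}. Take $A,B\in\F$; if both lie in $\F_0$ the inequality $v(A\cup B)+v(A\cap B)\ge v(A)+v(B)$ is trivial, so pick $s\ge 1$ with $A,B\in\F_s$, assume $v(A)\ge v(B)$ (the target inequality is symmetric), and dispose of $v(A)=0$ by $v\ge 0$; so set $\theta:=v(B)/v(A)\in(0,1]$. Normalize $u$ so $u(x_0)=0$ for some $x_0$, and using non-triviality, connectedness of $X$ and continuity of $u$ (so $u(X)$ is a non-degenerate interval containing $0$ and $1$) fix $x_1,x_\theta\in X$ with $u(x_1)=1$, $u(x_\theta)=\theta$. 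Consider the acts that equal $x_0$ in every period except as noted: $f$ with $f_s=x_\theta\mathbf{1}_A+x_0\mathbf{1}_{A^c}$, $g$ with $g_s=x_1\mathbf{1}_B+x_0\mathbf{1}_{B^c}$, and $h$ with $h_s=g_s$; these are bounded, finite and adapted, $h_s$ is comonotonic with $g_s$ and with the constants $g_i$ ($i>s$), and $V(f)=\beta^s\theta\, v(A)=\beta^s v(B)=V(g)$, so $f\sim g$. Applying (PS) with $t=s$ and prefix $(x_0,x_0,\dots)$ yields $\hat f\succsim\hat g$ with $\hat f=(x_0,\dots,x_0,h_s,f_s,x_0,\dots)$, $\hat g=(x_0,\dots,x_0,h_s,g_s,x_0,\dots)$. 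Here $W(\hat g)=(\beta^s+\beta^{s+1})u(g_s)$, so $V(\hat g)=(\beta^s+\beta^{s+1})v(B)$; and $W(\hat f)=\beta^s u(g_s)+\beta^{s+1}u(f_s)$ takes the four distinct values $\beta^s+\beta^{s+1}\theta>\beta^s>\beta^{s+1}\theta>0$ on $A\cap B$, $B\setminus A$, $A\setminus B$, $(A\cup B)^c$, so $V(\hat f)=\beta^{s+1}\theta\, v(A\cap B)+(\beta^s-\beta^{s+1}\theta)v(B)+\beta^{s+1}\theta\, v(A\cup B)$. Substituting into $V(\hat f)\ge V(\hat g)$, cancelling $\beta^{s+1}$, and using $v(B)=\theta v(A)$, a short computation collapses the inequality to $v(A\cup B)+v(A\cap B)\ge v(A)+v(B)$, i.e. convexity of $v$.

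The genuinely non-obvious step, and the one I would flag, is in the ``only if'' direction: one must build the auxiliary acts so that (i) adaptedness, boundedness and finiteness hold, (ii) the comonotonicity restriction of (PS) is met — here trivially, since $h_s$ is taken equal to $g_s$ and all other coordinates are constant — and, crucially, (iii) the ``subsidy'' outcome $x_\theta$ is calibrated with $\theta=v(B)/v(A)$ so that $f\sim g$ \emph{exactly}; it is this exact indifference, rather than the degenerate special case $v(A)=v(B)$, that lets a single invocation of (PS) deliver supermodularity for the given pair, and picking $\theta=v(B)/v(A)$ (instead of trying to shrink $A$, which is impossible for a general capacity) is the idea that makes the argument go through. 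Everything else is the bookkeeping of a four-cell Choquet integral, essentially already done in Example~1 - cont. As a consistency check, one may also note that (PS) is equivalent to superadditivity of $\varphi\mapsto\int\varphi\,dv$ on $B(\Omega,\F)$ — obtained by running the ``if'' argument with $G$ a constant and then using translation and scale invariance to pass to arbitrary $\varphi,\psi$ — which is the classical characterization of convex capacities.
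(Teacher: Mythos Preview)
Your proof is correct. The ``if'' direction (convex $v$ implies (PS)) is essentially what the paper does: superadditivity of the Choquet integral plus comonotonic additivity on the $g$-side. One cosmetic gap: you dispose of $v(A)=0$ but then assert $\theta\in(0,1]$, silently assuming $v(B)>0$; the case $v(B)=0$, $v(A)>0$ is of course trivial by monotonicity of $v$ ($v(A\cup B)\ge v(A)$, $v(A\cap B)\ge 0$), but you should say so.

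The ``only if'' direction is where your route genuinely differs from the paper's. The paper first lifts (PS) to the abstract functional $\tilde I$ on $B^o$ (its Lemma built in parallel with the comonotonic-additivity lemma): if $c$ is comonotonic with $b$ and $\tilde I(a)=\tilde I(b)$, then $\tilde I(a+c)\ge\tilde I(b+c)$. It then applies this with $a=1_A$, $b=v(A)1_\Omega$ (a \emph{constant}, so comonotonic with anything) and $c=1_B$, obtaining $\tilde I(1_A+1_B)\ge v(A)+v(B)$; since $1_A+1_B=1_{A\cup B}+1_{A\cap B}$ are comonotone, comonotonic additivity gives $v(A\cup B)+v(A\cap B)$ on the left. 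No calibration, no case split, no four-cell computation. Your approach instead stays at the level of concrete acts and engineers the indifference $f\sim g$ by picking $u(x_\theta)=\theta=v(B)/v(A)$, then runs the explicit four-value Choquet calculation. Both work; the paper's argument is shorter and avoids the case analysis and the calibration step, while yours has the pedagogical virtue of making the link to the motivating Example fully explicit. Amusingly, the ``consistency check'' you tack on at the end --- deriving superadditivity of $\varphi\mapsto\int\varphi\,dv$ by taking $G$ constant --- is precisely the paper's idea, so you already have their cleaner proof in hand as an aside.
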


It is known that (see Schmeidler \cite{Schmeidler}) when $v$ is a convex capacity the following equality holds
\begin{equation}\label{eq:CEU_core}
\int \sum_{t=0}^{\infty} \beta^t u(h_t) dv=\min_{P\in core(v)}\E_P\left[ \sum_{t=0}^{\infty}\beta^t u(h_t)  \right]. 
\end{equation}
This equality suggests a sharp interpretation of a Choquet integral with respect a convex capacity $v$ and justifies why we call pessimist a DM with preferences as the ones in Theorem \ref{th:CEU_v_convex}. When $v$ is convex, an agent reasons as if  she computes the discounted expected utility for all probabilities in $core(v)$ and then selects the minimal one. This representation is similar to the one axiomatized by Kochov \cite{Kochov}. 

Consider the following axiom proposed by Kochov \cite{Kochov}.

\medskip
\noindent \textsc{Intertemporal Hedging} (IH) For all $t\in \N$, for all $d\in \D$ and for all $g,h\in\hh$,
\begin{multline*}
 (d_0,\dots,d_{t-1},h_t,h_{t},d_{t+2}\dots)\sim (d_0,\dots,d_{t-1},g_t,g_{t},d_{t+2}\dots)\\
 \Rightarrow  (d_0,\dots,d_{t-1},g_t,h_{t},d_{t+2}\dots)\succsim (d_0,\dots,d_{t-1},h_t,h_{t},d_{t+2}\dots)
\end{multline*}
 The interpretation of (IH) is that a DM prefers to smooth consumption through states rather than through time. This in turns implies that she is pessimistic \textit{vis-\`a-vis} Nature's choice of the state of the world. We can notice in fact that the act $(d_0,\dots,d_{t-1},g_t,h_{t},d_{t+2}\dots)$ allows for a temporal mix that may provide some hedging against uncertainty. As explained by Kochov \cite{Kochov} this axiom is the intertemporal counterpart to the Ambiguity Aversion axiom of Gilboa and Schmeidler \cite{GS}.

We show now that the representation of Theorem \ref{th:CEU_v_convex} can be obtained also using (IH) and weakening (PS) to (CS).

\begin{myth}\label{th:CEU_v_convex_IH}
A preference relation $\succsim$ over $\hh$ satisfies (C), (M), (TS), (CS) and (IH) if and only if there exists a  convex capacity $v:\F\rightarrow[0,1]$, a continuous  utility function $u:X\rightarrow \R$ and a discount factor $\beta\in(0,1)$ such that 
$\succsim$ is represented by
$$
V(h)=\int \sum_{t=0}^{\infty} \beta^t u(h_t) dv.
$$   
Moreover $v$ and $\beta$ are unique and $u$ is unique up to a positive affine transformation.
\end{myth}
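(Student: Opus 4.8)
The plan is to lean on Theorem~\ref{th:CEU}: since $\succsim$ satisfies (C), (M), (TS) and (CS), it is already represented by $V(h)=\int\sum_{t\ge 0}\beta^t u(h_t)\,dv$ for some capacity $v$, some continuous $u$ and some $\beta\in(0,1)$, with $v,\beta$ unique and $u$ unique up to a positive affine transformation. Because the asserted representation has exactly this form, the only thing to prove is the \emph{equivalence}, for a preference already of this type, between (IH) and convexity of $v$; the uniqueness clause then carries over verbatim. Throughout, write $W(\xi):=\int\xi\,dv$ for bounded $\F$-measurable real $\xi$, and recall that $W$ is positively homogeneous, constant-additive ($W(\xi+c)=W(\xi)+c$) and comonotonically additive. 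For the ``if'' direction: convexity of $v$ is the classical Schmeidler~\cite{Schmeidler} condition making $W$ superadditive, hence (being homogeneous) concave. Inserting into $V$ the three acts of (IH) — which coincide outside periods $t$ and $t+1$ — one gets, exactly as in the \emph{Example~1 - cont} computation, that the premise of (IH) reads $W(u(h_t))=W(u(g_t))$ and its conclusion reads $W\bigl(\beta^t u(g_t)+\beta^{t+1}u(h_t)\bigr)\ge(\beta^t+\beta^{t+1})W(u(h_t))$, which is immediate from homogeneity and concavity of $W$. So a convex $v$ delivers (C), (M), (TS), (CS) (from Theorem~\ref{th:CEU}) and (IH).

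For the ``only if'' direction I would keep the representation from Theorem~\ref{th:CEU} and exploit (IH). By the same two-coordinate computation, (IH) is equivalent to: for every $t$ and all simple $\F_t$-measurable $X$-valued $p,q$,
\[
W(u(p))=W(u(q))\ \Longrightarrow\ W\bigl(\beta^t u(q)+\beta^{t+1}u(p)\bigr)\ \ge\ (\beta^t+\beta^{t+1})\,W(u(p)).
\]
Dividing by $\beta^t$, using positive homogeneity, and setting $\lambda:=1/(1+\beta)\in(0,1)$, this says that $W$ is concave \emph{at the single weight $\lambda$, along pairs of equal $W$-value}: $W(\lambda u(q)+(1-\lambda)u(p))\ge\lambda W(u(q))+(1-\lambda)W(u(p))$ whenever $W(u(p))=W(u(q))$. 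Since the filtration exhausts $\F$ and acts are finite, as $t$ ranges over $\N$ the integrands $u(p),u(q)$ range over all simple $\F$-measurable functions valued in the interval $u(X)$; and as $\succsim$ is non-trivial, $u$ is non-constant, so by the affine freedom in $u$ we may fix the representative so that $u(X)$ contains any prescribed bounded range — in effect $u(p),u(q)$ range over all simple $\F$-measurable real functions.

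It remains to upgrade this. First drop the side condition: given simple $\xi,\eta$, shift whichever has the smaller $W$-value by the constant $c:=|W(\xi)-W(\eta)|$ (legitimate after enlarging the range of $u$) and use constant-additivity of $W$ to cancel the extra terms on both sides; this yields $W(\lambda\xi+(1-\lambda)\eta)\ge\lambda W(\xi)+(1-\lambda)W(\eta)$ for \emph{all} simple real $\xi,\eta$. Next substitute $\xi=\xi'/(1-\lambda)$, $\eta=\eta'/\lambda$ and use positive homogeneity: the inequality collapses to superadditivity $W(\eta'+\xi')\ge W(\eta')+W(\xi')$, first on simple functions and then, by density, on all bounded $\F$-measurable functions. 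Finally take $\xi=\mathbf 1_A$, $\eta=\mathbf 1_B$ for $A,B\in\F$: since $\mathbf 1_A+\mathbf 1_B=\mathbf 1_{A\cup B}+\mathbf 1_{A\cap B}$ with $\mathbf 1_{A\cup B},\mathbf 1_{A\cap B}$ comonotonic, comonotonic additivity gives $W(\mathbf 1_A+\mathbf 1_B)=v(A\cup B)+v(A\cap B)$, while $W(\mathbf 1_A)+W(\mathbf 1_B)=v(A)+v(B)$, so superadditivity reads $v(A\cup B)+v(A\cap B)\ge v(A)+v(B)$, i.e.\ $v$ is convex (equivalently, one may invoke directly the Schmeidler~\cite{Schmeidler} equivalence between convexity of a capacity and superadditivity of its Choquet integral). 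Uniqueness is inherited unchanged from Theorem~\ref{th:CEU}.

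The one genuinely delicate point is this last upgrading: (IH) only gives concavity of the Choquet integral at the \emph{single} ratio $1/(1+\beta)$, only \emph{along pairs of equal integral}, and only for integrands of the special form $u(p)$ with $p$ valued in the possibly bounded set $u(X)$. Removing the equal-integral restriction (via constant-additivity), removing the range restriction (via the affine indeterminacy of $u$), and converting single-ratio concavity into full superadditivity (via positive homogeneity) each demands a small but real argument. Everything else — the representation, the translation of (IH) into a property of $\int\cdot\,dv$, and the passage from superadditivity to convexity — is routine and entirely parallel to the proof of Theorem~\ref{th:CEU_v_convex}.
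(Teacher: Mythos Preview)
Your argument is correct in substance and follows a genuinely different path from the paper. The paper's own proof is a two-line citation: it invokes Kochov's Theorem~1 (so that (C), (M), (TS), (KochS), (IH) already yield a MaxMin representation $\min_{P\in\mathcal P}\E_P[\sum\beta^t u(h_t)]$) together with Schmeidler's Proposition in \cite{Schmeidler2}, p.~582, which says that a Choquet functional satisfying uncertainty aversion has a convex capacity; combining this with the Choquet form from Theorem~\ref{th:CEU} finishes the job. You instead work entirely inside the Choquet representation of Theorem~\ref{th:CEU}, translate (IH) into a one-ratio, equal-value concavity statement for $W=\int\cdot\,dv$, and then bootstrap to full superadditivity using constant-additivity and positive homogeneity. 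Your route is more self-contained (it does not need Kochov's MaxMin theorem) and mirrors the hands-on proof of Theorem~\ref{th:CEU_v_convex}; the paper's route is shorter but leans on heavier external machinery.

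Two small points of hygiene. First, the ``affine freedom in $u$'' does not literally let $u(X)$ swallow any bounded interval when $u$ is bounded; what actually works is the positive homogeneity of $W$: once you have the inequality for integrands valued in the interior of $u(X)$, scale any simple $\xi,\eta$ by a small $c>0$ so that $c\xi,c\eta$ land there, apply the inequality, and divide by $c$. Second, in your substitution step you want $a=\xi'/\lambda$ and $b=\eta'/(1-\lambda)$ (not the swap you wrote), so that $\lambda a+(1-\lambda)b=\xi'+\eta'$ and homogeneity gives $W(\xi'+\eta')\ge W(\xi')+W(\eta')$. Neither affects the validity of the argument.
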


The utility function in Theorem \ref{th:CEU_v_convex} and Theorem \ref{th:CEU_v_convex_IH} is a particular case of the intertemporal MaxMin model studied by Kochov \cite{Kochov}
\begin{equation}\label{eq:maxmin}
V(h)=\min_{P\in \PA}\E_P\left[ \sum_{t=0}^{\infty}\beta^t u(h_t)  \right]
\end{equation}
where $\PA$ is a convex and closed (with respect to the weak* topology) set of  probabilities. Actually, if one sets $\PA=core(v)$, the utilities in (\ref{eq:CEU_core}) and (\ref{eq:maxmin}) are exactly  the same. 

If one is willing to obtain Kochov's \cite{Kochov} Theorem 1, in which he characterizes the utility function in (\ref{eq:maxmin}), then   (CS) should be weakened to the following axiom (called Stationarity (S) in Kochov \cite{Kochov}).

\medskip
\noindent \textsc{Kochov Stationarity} (KochS) For all $x\in X$ and $h ,g \in\hh$, $h\succsim g$ if and only if $(x,h)\succsim (x,g)$. 
\medskip

This axiom is weaker than (CS) since the outcome $x\in X$ can be identified with a constant random variable, which is comonotonic with all other variables. Notice also that it is not needed to state the axiom for all periods of time $t\in \N$. A constant random variable is $\F_t$-measurable for all $t\in \N$ and therefore there are no measurability concerns. The interpretation of this axiom is  the same as for K-Stationarity. Kochov \cite{Kochov} underlines an interesting parallel between (KochS) and the Certainty Independence axiom of Gilboa and Schmeidler \cite{GS}. 

Proposition \ref{prop:station_axioms} recalls how the different stationarity axioms presented in the paper are linked one to another.

\begin{prop}\label{prop:station_axioms}
The following implications hold:
$$
(SS)\Rightarrow (PS) \Rightarrow (CS) \Rightarrow (KochS)\Rightarrow (K\text{-}Stationaity).
$$
\end{prop}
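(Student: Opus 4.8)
The plan is to establish the four implications in the stated order, each by simply specialising the quantifiers of the stronger axiom; I expect none of them to require real work, the only delicate point being the routine verification that the auxiliary sequences one writes down are genuine elements of $\hh$.

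For $(SS)\Rightarrow(PS)$ I would, given $t\in\N$, a deterministic prefix $d\in\D$ and acts $f,g,h\in\hh$, build an auxiliary act $\tilde h\in\hh$ whose first $t$ coordinates are $d_0,\dots,d_{t-1}$ and all of whose remaining coordinates equal the $\F_t$-measurable variable $h_t$; this is legitimate since $\F_t\subseteq\F_i$ for $i\geq t$ and boundedness and finiteness are clearly preserved. Feeding $\tilde h,f,g$ into (SS) returns exactly the (CS) biconditional precisely when $h_t$ is comonotonic with all the $f_i$ and $g_i$, $i\geq t$; and since (SS) delivers this biconditional with no comonotonicity restriction whatsoever, its forward half in particular yields the extra one-way implication demanded by (PS), whose hypothesis (that $h_t$ be comonotonic only with the $g_i$'s) is weaker still. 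The implication $(PS)\Rightarrow(CS)$ then needs nothing at all, since (PS) is defined so as to include (CS).

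For $(CS)\Rightarrow(KochS)$ I would, for a fixed $x\in X$, apply (CS) at $t=0$ with the constant act $h:=(x,x,\dots)\in\D\subseteq\hh$ (so the variable inserted at period $0$ is $h_0=x$) and arbitrary $f,g\in\hh$: the prefix is empty and, $x$ being constant hence comonotonic with every random variable, the comonotonicity hypothesis of (CS) holds automatically, so (CS) reads $f\succsim g\iff(x,f)\succsim(x,g)$, with $(x,f),(x,g)\in\hh$ because $x$ is $\F_0$-measurable and shifting preserves measurability, boundedness and finiteness; this is exactly (KochS). Finally $(KochS)\Rightarrow\text{P.4}$ is obtained by restricting (KochS) to deterministic acts $d,d'\in\D\subseteq\hh$ and noting $(x,d),(x,d')\in\D$. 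The main --- and essentially only --- thing to be careful about throughout is that every spliced or shifted sequence genuinely lies in $\hh$, which follows from $\F_t\subseteq\F_{t+1}$ together with the stability of compact sets and of finitely generated algebras under finite unions, and the recorded fact that a constant random variable is comonotonic with everything.
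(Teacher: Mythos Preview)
Your proof is correct. The paper states this proposition without proof, evidently treating the chain of implications as immediate from the definitions; your argument spells out precisely the specialisations needed (the auxiliary act $\tilde h=(d_0,\dots,d_{t-1},h_t,h_t,\dots)$ for $(SS)\Rightarrow(PS)$, the constant insert at $t=0$ for $(CS)\Rightarrow(KochS)$), together with the measurability and boundedness checks that guarantee every spliced sequence lies in $\hh$.
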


Weakening (CS) to (KochS) and adding (IH) to the other basic axioms, allow us to recover Kochov's \cite{Kochov} main result (Theorem 1, p. 245).

\begin{myth}[Kochov \cite{Kochov}]\label{th:Kochov}
A preference relation $\succsim$ over $\hh$ satisfies (C), (M), (TS), (KochS) and (IH) if and only if there exists nonempty weak*-closed convex set $\PA$ of probabilities, a continuous strictly increasing utility function $u:X\rightarrow \R$ and a discount factor $\beta\in(0,1)$ such that  $\succsim$ is represented by
$$
V(h)=\min_{P\in \PA}\E_P\left[ \sum_{t=0}^{\infty}\beta^t u(h_t) \right].
$$ 
Moreover $v$ and $\beta$ are unique and $u$ is unique up to a positive affine transformation.
\end{myth}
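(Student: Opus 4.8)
The plan is to note first that axioms (C), (M), (TS), (KochS) and (IH) coincide verbatim with Kochov's \cite{Kochov} Continuity, Monotonicity, Time Separability, Stationarity and Intertemporal Hedging, so the statement is his Theorem~1 and one may simply invoke it; this does not reduce to Theorem~\ref{th:CEU_v_convex_IH}, because (KochS) is strictly weaker than (CS) (Proposition~\ref{prop:station_axioms}) and the functional in (\ref{eq:maxmin}) is strictly more general than a Choquet integral against a convex capacity, so a direct argument is called for. I would carry it out in three steps mirroring the proofs of Theorems~\ref{th:EU} and~\ref{th:CEU}. \emph{Step~1 (deterministic layer).} Restricting $\succsim$ to $\D$, I would check that Koopmans' postulates P.1--P.5 hold there: P.1 is (C) on $\D$, P.3 is (TS), P.4 is (KochS) on $\D$ (where it is K-Stationarity), and P.2 and P.5 follow from continuity and non-triviality as in Proposition~\ref{prop:Koop_nest}. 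Proposition~\ref{prop:DU} then gives a continuous $u:X\to\R$ and $\beta\in(0,1)$ with $W(d):=\sum_{t}\beta^{t}u(d_{t})$ representing $\succsim$ on $\D$; note $W(x,d)=u(x)+\beta W(d)$.

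\emph{Step~2 (reduction to a functional of the utility profile).} For $h\in\hh$ the deterministic continuation $h(\omega)=(h_{0}(\omega),h_{1}(\omega),\dots)\in\D$ is well defined, hence so is the bounded, simple, $\F$-measurable real function $\hat h:\omega\mapsto W(h(\omega))=\sum_{t}\beta^{t}u(h_{t}(\omega))$. Using (C), the connectedness of $X$, and the fact that $W(\D)$ is an interval, every act admits a deterministic certainty equivalent; with (M) this lets me show that $\succsim$ is represented by $V(h)=I(\hat h)$ for a well-defined, monotone, normalized, continuous functional $I$ on the set of such utility profiles.

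\emph{Step~3 (shape of $I$).} Since $\widehat{(x,h)}(\omega)=u(x)+\beta\,\hat h(\omega)$, axiom (KochS) reads $I(\hat h)\ge I(\hat g)\iff I\bigl(u(x)+\beta\hat h\bigr)\ge I\bigl(u(x)+\beta\hat g\bigr)$ for all $x\in X$; letting $x$ vary (and inserting several outcomes) together with the normalization $I(j)=j$ on constants forces $I$ to be constant-additive and positively homogeneous, the lottery-free counterpart of Certainty Independence. Reading (IH) on the doubled-period acts $(d_{0},\dots,d_{t-1},h_{t},h_{t},d_{t+2},\dots)$ and simplifying with this homogeneity turns it into the uncertainty-aversion inequality ``$I(\phi)=I(\psi)$ implies $I\bigl(\tfrac{1}{1+\beta}\phi+\tfrac{\beta}{1+\beta}\psi\bigr)\ge I(\psi)$'', which with continuity upgrades to concavity of $I$. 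A monotone, normalized, constant-linear and concave functional on simple measurable functions is the support functional of a unique nonempty weak*-closed convex set $\PA$ of finitely additive probabilities, $I(\phi)=\min_{P\in\PA}\int\phi\,dP$ (the standard duality representation; cf. Gilboa and Schmeidler \cite{GS}). Taking $\phi=\hat h$ gives $V(h)=\min_{P\in\PA}\E_{P}\bigl[\sum_{t}\beta^{t}u(h_{t})\bigr]$, and the uniqueness claims follow by the usual arguments.

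The hard part is Step~2: collapsing $\succsim$ to a functional of $\hat h$ needs certainty equivalents (continuity, connectedness of $X$, and $W(\D)$ an interval) and, crucially, the domain of $I$ must be rich enough for the duality step, i.e. every simple $\F$-measurable profile valued in $W(\D)$ should be realized as $\hat h$ for some $h\in\hh$ --- this is where the ``finite act'' assumption earns its keep. One must also keep track of measurability along the filtration $(\F_{t})_{t}$; this is why (KochS), like (SS) and (CS), only allows inserting a constant outcome $x\in X$ in front of an act (a constant is $\F_{t}$-measurable for every $t$), and why, unlike (SS) and (CS), it requires no period-$t$ version.
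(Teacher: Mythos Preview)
The paper does not prove this theorem at all: it is stated as Kochov's result (his Theorem~1, p.~245) and simply cited. Your opening sentence --- that the axioms coincide with Kochov's and one may invoke his Theorem~1 directly --- is therefore exactly the paper's ``proof.''

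Your three-step sketch goes well beyond what the paper does and is essentially a correct outline of Kochov's own argument; indeed the paper's proof of Theorem~\ref{th:CEU} borrows Kochov's Lemmas~5, 7--12 for precisely the Step~1 and Step~2 infrastructure you describe (the deterministic representation, certainty equivalents, the absorbing set, and the extension of $I$ to $B^{o}$ with $\beta$-homogeneity and translation invariance). One small caution: in Step~1 you invoke Proposition~\ref{prop:Koop_nest} for P.2 and P.5, but that proposition is stated under (SS), and its proof of P.5 uses (SS) at positions $t\ge 1$ (inserting $d_{0}$ between two copies of $x$), which (KochS) does not license directly. The conclusion is still true --- Kochov's Lemma~7 handles it --- but the reference should be to Kochov rather than to Proposition~\ref{prop:Koop_nest}.
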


\section{Conclusions}\label{sec:conclusion}

In this paper we make two contributions that can be of high interest for economists  working with  problems that involve decisions through time and under uncertainty.

First, we axiomatize the discounted expected utility model in a framework without lotteries. Despite the popularity of this model, no axiomatization is available in the literature. This paper shows that a Strong Stationarity axiom is the key behavioral conditions behind discounted expected utility. Strong Stationarity  plays the same role as the Independece axiom for decisions under uncertainty in an atemporal setting.

Second, we argue that Strong Stationarity neglects agents' hedging behavior. Strong Stationarity is subject to the same critiques as the Independece axiom. We solve this problem introducing a new axiom, Comonotonic Stationarity. This latter condition allows us to generalize the discounted expected utility model to the Choquet discounted expected utility model. Our axioms have a simple interpretation, akin to the original stationarity condition of Koopmans \cite{Koop}, \cite{Koop72}. Testing these axioms in the lab is an interesting topic for future work.


\appendix
\section{Appendix}
\small

\noindent \textit{\textbf{Proof of Proposition \ref{prop:DU}}}

\begin{proof}
As mentioned in Section \ref{sec:DU},  Bleichrodt \textit{et al.} \cite{Ble} do not use the same set of axioms that we employ. Specifically they do not use axiom P.1 and axiom P.5. We list here their alternative axioms.\\
Notation: given $T\in\N$, $X_T=\{(x_0,x_1,\dots,x_T,\alpha,\alpha)| x_0,\dots,x_T,\alpha\in X\}$. Notice that, for any $T\in \N$, there is a one-to-one function from $X_T$ and the product $X^{T+1}$.

\medskip
\noindent  \textit{Ultimate Continuity (UC).} $\succsim$ is continuous (with respect to the product topology) on each set $X_T$, i.e. the sets $\{x\in X_T| x\succsim y\}$ and $\{x\in X_T| y\succsim x\}$ are closed for all $y\in X_T$.
\medskip

\medskip
\noindent  \textit{Constant equivalent (CE).} $\succsim$ satisfies constant equivalence if for all $d\in \D$ there exists a constant sequence $x_d\in \D$ such that $d\sim x_d$.
\medskip

\medskip
\noindent  \textit{Tail Robustness (TR).} $\succsim$ satisfies tail robustness if for all constant sequence $x\in \D$, if $d\succ (\prec) x$ then there exists $t\in\N$ such that $(d_0,\dots,d_T,x,x,)\succ (\prec) x$ for all $T\geq t$.
\medskip

\begin{myth}[Bleichrodt \textit{et al.} \cite{Ble}]\label{th:wakker}
Let $\succsim$ be defined over $\D'\supset\D$, a domain that contains all ultimately constant programs, then TFAE:
\begin{itemize}
\item[(i)] DU holds over $\D'$ with $u$ continuous and not constant.
\item[(ii)] $\succsim$ satisfies P.2, P.3, P.5, UC, CE, TR.
\end{itemize}
\end{myth}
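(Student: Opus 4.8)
This is the Bleichrodt--Rohde--Wakker characterization of constant discounting, so in the paper it is invoked rather than proved; here is how I would establish it. The direction (i)$\Rightarrow$(ii) is a routine verification for $V(d)=\sum_t\beta^t u(d_t)$: non-constancy of $u$ and $\beta>0$ give P.2; additivity across coordinates gives P.3; strict positivity of each weight $\beta^t$ gives P.5; on each $X_T$ the functional is a finite sum of continuous terms plus the continuous tail $\tfrac{\beta^{T+1}}{1-\beta}u(\alpha)$, giving UC; CE holds because $u(X)$ is an interval ($X$ connected, $u$ continuous) and $(1-\beta)V(d)$ lies between $\inf_t u(d_t)$ and $\sup_t u(d_t)$, hence in $u(X)$, by boundedness of $d$; and TR holds because, writing $d^T=(d_0,\dots,d_T,x,x,\dots)$, one has $V(d^T)\to V(d)$ as $T\to\infty$, so $V(d^T)>u(x)/(1-\beta)$ for large $T$ whenever $d\succ x$.

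The content is (ii)$\Rightarrow$(i), which I would carry out in three stages. First, a finite-horizon additive representation: fix $T\geq 2$, identify the ultimately-constant programs of $X_T$ with a product $X^{T+1}$ of connected, separable, first-countable spaces, note that UC gives continuity, P.5 gives monotonicity (so each coordinate is solvable and, with P.2 for coordinate $0$, essential), and P.3 together with the Koopmans stationarity hypothesis yields the coordinate-independence conditions under which a Debreu/Wakker additive-representation theorem applies; this produces continuous $\phi^T_0,\dots,\phi^T_T$, unique up to a common positive affine transformation, with $\sum_{t\le T}\phi^T_t$ representing $\succsim$ on $X_T$, and embedding $X_T\subset X_{T+1}$ together with the uniqueness clause makes them consistent across $T$, so I drop the superscript. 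Second, identification of the discount factor: stationarity (shifting a program one period and inserting a common first coordinate) forces $\phi_{t+1}=\beta\phi_t$ for a single constant $\beta$, with $\beta>0$ from the strict part of P.5 and $\beta<1$ from convergence of the series once we leave the finite-horizon domains; rescaling, $\phi_t=\beta^t u$ with $u:=\phi_0$ continuous and non-constant.

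Third, the extension to all of $\D'$. For $d\in\D'$ boundedness makes $\sum_t\beta^t u(d_t)$ converge; CE supplies $x_d$ with $d\sim x_d$, and one shows $u(x_d)=\sum_t\beta^t u(d_t)$ by comparing $d$ with constants and passing to the limit along the truncations $d^T$. The subtle point is that $V(d):=\sum_t\beta^t u(d_t)$ actually represents $\succsim$ on all of $\D'$, not merely agrees with it on each $X_T$: if $V(d)>V(d')$ but $d'\succsim d$, comparing each of $d,d'$ with an intermediate constant and using TR to climb up the finite-horizon truncations yields a contradiction. This is where CE and TR are indispensable --- they replace a continuity axiom on the whole of $\D'$ and exclude the ``bounded-utility'' pathology the authors recall in connection with Koopmans' argument. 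Uniqueness of $\beta$ and of $u$ up to positive affine transformation then follows from the uniqueness in the additive representation together with P.2.

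I expect the passage from finite to infinite horizon in the third stage to be the main obstacle. The finite-horizon additive representation and the extraction of the exponential weights are essentially classical Koopmans/Debreu material; it is the control of the infinite tail --- convergence of the series, correctness of the constant equivalent, and the guarantee that no strict-preference information escapes in the limit --- that requires genuine care, and it is precisely why the axiom list carries the two non-standard conditions CE and TR in place of a single continuity axiom.
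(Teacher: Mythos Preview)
You are right that the paper does not prove this result; it is quoted from Bleichrodt, Rohde and Wakker and then invoked to obtain Proposition~\ref{prop:DU}. So there is no ``paper's proof'' to compare against, and your outline is a faithful reconstruction of the BRW argument: Debreu/Wakker additive representation on finite horizons, stationarity to force the geometric weights, and CE together with TR to control the passage to the infinite horizon.

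Two points are worth flagging. First, the axiom list in the theorem as printed contains an evident slip: it reads ``P.2, P.3, P.5'' but should read ``P.2, P.3, P.4.'' The paper itself says, just above the theorem, that Bleichrodt \textit{et al.} ``do not use axiom P.1 and axiom P.5,'' and without K\nobreakdash-Stationarity the exponential form $\beta^t$ plainly cannot be recovered. You correctly use stationarity in Stages~1 and~2, so you are proving the intended statement; but you also invoke P.5 in Stage~1 to secure solvability and essentiality of each coordinate, and that axiom is not among BRW's hypotheses. In a full write-up you would have to derive the needed monotonicity on finite horizons from P.2, P.4 and UC rather than assume it. Second, a minor count: $X_T$ has $T+2$ free coordinates $(x_0,\dots,x_T,\alpha)$, so the finite product in Stage~1 should be $X^{T+2}$, not $X^{T+1}$; the tail constant $\alpha$ is itself a coordinate to which the additive representation assigns the weight $\beta^{T+1}/(1-\beta)$.
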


In Observation 3, Bleichrodt \textit{et al.} \cite{Ble} noticed that ``Tail robustness can also be replaced by monotonicity if [$\D'$] contains only bounded programs." Boundededness means that for every $d\in \D'$ there exist $x,y\in X$ such that $x\succsim d \succsim y$. 

\noindent \textit{Remark.} The definition of $\D$ (compactness) and P.5 imply boundedness. Therefore Observation 3 of Bleichrodt \textit{et al.} \cite{Ble} applies and we only need to show that UC and CE hold.

\begin{prop}\label{prop:P1-5_to_UC_CE}
If a preference relation $\succsim$ over $\D$ satisfies P.1-5 then it satisfies UC and CE.
\end{prop}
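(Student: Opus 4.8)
The plan is to check \emph{Ultimate Continuity} (UC) and \emph{Constant Equivalent} (CE) in turn, in each case using that $X$ is first-countable --- so that the product topology on every countable power of $X$ is first-countable and closedness can be tested along sequences --- and then reducing the relevant limiting argument to a single application of P.1 on a compact set built by hand. The point of the reduction is that P.1 only constrains deterministic acts lying inside a \emph{fixed} $K^{\infty}$, whereas UC quantifies over all of $X_T$ and CE over all of $X$; so the recurring device is: given a convergent sequence, manufacture a compact $K$ containing all of its terms, their common limit, and the relevant comparison act.

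For UC, fix $T\in\N$ and $y\in X_T$. To see that $\{x\in X_T: x\succsim y\}$ is closed in $X_T$, take $d^{(n)}\to d$ in $X_T$ with $d^{(n)}\succsim y$ for all $n$. Convergence in the product topology is coordinatewise, and each of the $d^{(n)}$ and $d$ uses only the $T+2$ relevant coordinate-slots $x_0,\dots,x_T,\alpha$; I would then put
$$
K:=\{y_t:t\in\N\}\ \cup\ \bigcup_{i=0}^{T+1}\bigl(\{d^{(n)}_i:n\in\N\}\cup\{d_i\}\bigr),
$$
which is compact (a finite set, unioned with finitely many sets of the form ``convergent sequence together with its limit''). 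Since $d^{(n)},d,y\in K^{\infty}$ and the product topology on $K^{\infty}$ is the subspace topology inherited from $X^{\infty}$, P.1 gives that $\{e\in K^{\infty}: e\succsim y\}$ is closed in $K^{\infty}$ and contains every $d^{(n)}$, hence contains $d$. Thus $d\succsim y$; the lower-contour set is identical. This step uses only P.1.

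For CE, fix $d\in\D$ and let $K_d\subset X$ be the compact set with $d_t\in K_d$ for all $t$ provided by the definition of $\D$. First I would observe that the induced relation $\succsim$ on $X$, restricted to $K_d$, is a continuous complete preorder: composing the continuous diagonal map $x\mapsto(x,x,\dots)$ from $K_d$ into $K_d^{\infty}$ with the closed contour sets of P.1 (taken on the compact set $K_d$) shows $\{x\in K_d:x\succsim z\}$ and $\{x\in K_d:z\succsim x\}$ are closed for each $z\in K_d$. A continuous complete preorder on a compact space attains a maximum $\bar x$ and a minimum $\underline x$, by a finite-intersection-property argument on the closed upper (respectively lower) contour sets. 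Since $\bar x\succsim d_t\succsim\underline x$ for all $t$, K-Monotonicity (P.5) yields $(\bar x,\bar x,\dots)\succsim d\succsim(\underline x,\underline x,\dots)$. Hence $A:=\{x\in X:(x,x,\dots)\succsim d\}$ and $B:=\{x\in X:d\succsim(x,x,\dots)\}$ are both nonempty, they cover $X$ by completeness, and each is closed in $X$ by exactly the limiting device of the UC step (for $x^{(n)}\to x$ in $A$, apply P.1 on the compact set $\{x^{(n)}:n\in\N\}\cup\{x\}\cup K_d$). Connectedness of $X$ then forces $A\cap B\neq\emptyset$, and any $x_d\in A\cap B$ gives $(x_d,x_d,\dots)\sim d$, which is CE.

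The proof is mostly routine bookkeeping, and P.2, P.3, P.4 are not needed. The step requiring the most care is the compact-set construction above, which is what makes P.1 usable when the acts involved need not lie in a common $K^{\infty}$ to begin with; the supporting facts (first-countability passing to the product topology, and the product topology on $K^{\infty}$ agreeing with the subspace topology from $X^{\infty}$) are standard but should be invoked explicitly.
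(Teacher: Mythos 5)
Your proof is correct and follows essentially the same strategy as the paper's: for UC, assemble by hand a compact set containing the terms of the convergent sequence together with their limits so that P.1 becomes applicable, and for CE, bound $d$ above and below by constants via compactness and P.5 and then run a connectedness argument on the closed upper and lower contour sets. The one place where you genuinely diverge is the last step of CE: the paper works inside the \emph{convex hull} $co(K_d)$ and asserts its connectedness, which is awkward since $X$ is only assumed to be a connected, separable, first-countable topological space and need not carry a linear structure at all; you instead take $A=\{x\in X:(x,x,\dots)\succsim d\}$ and $B=\{x\in X:d\succsim(x,x,\dots)\}$, observe that they are closed, nonempty, and cover $X$ by completeness, and invoke connectedness of $X$ directly. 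Your version is cleaner and avoids the ill-defined convex hull; you are also more explicit than the paper about why the $\succsim$-maximum and $\succsim$-minimum over $K_d$ exist (finite-intersection-property on closed contour sets, with closedness pulled back through the diagonal map), a point the paper dispatches with ``by compactness.'' The explicit appeals to first-countability (so that closedness can be tested on sequences) and to the agreement of the product topology on $K^{\infty}$ with the subspace topology are also worth keeping.
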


\begin{proof}
We show that $\succsim$ satisfies UC. Fix $T\in \N$ and $y\in X_T$. We will show that the set $U_y=\{x\in X_T| x\succsim y\}$ is closed. \\
Let $(\hat{x}^n)_n$ be a sequence in $U_y$ such that $\hat{x}^n\rightarrow \hat{x}$. Notice that $\hat{x}^n=(\hat{x}^n_0,\hat{x}^n_1,\dots,\hat{x}^n_T,\hat{\alpha}^n,\hat{\alpha}^n,\dots)$. By definition  $\hat{x}^n\rightarrow \hat{x}$ if and only if $\hat{x}^n_i\rightarrow \hat{x}_i$ for all $i\in \N$, with $\hat{x}^n_i=\hat{\alpha}^n$ for all $i\geq T+1$. Since $\hat{x}^n\in X_T$,  $\hat{\alpha}^n\rightarrow \hat{\alpha}$ and this implies that $\hat{x}\in X_T$. The sets $C_i=\{\hat{x}^n_i| n\in \N\}$, $i=0,\dots,T$, and $C_{\alpha}=\{\hat{\alpha}^n| n\in \N\}$ are compact and hence $C=\left(\cup_{i=0}^TC_i\right)\cup C_{\alpha}$ is compact. Since  $\hat{x}\in \D$, there is a compact set $\hat{K}$ such that $\hat{x}_t\in \hat{K}$ for all $t\in \N$. Therefore $K=C\cup \hat{K}$ is compact and $\hat{x}^n,\hat{x}\in K^{\infty}$ for all $n\in \N$. Consider now $U=\{d\in  K^{\infty}| d\succsim y\}$. We have therefore that for all $n\in \N$, $\hat{x}^n\succsim y$ and $\hat{x}^n\in U$, moreover $\hat{x}^n\rightarrow \hat{x}$ and since $U$ is closed by P.1 we obtain $\hat{x}\succsim y$. Hence $U_y$ is closed.

We show that $\succsim$ satisfies CE. Fix $d\in \D$, and let $K_d$ be a compact set such that $d_t\in K_d$ for all $t\in \N$. By compactness, there are $x_0, x_1\in K_d$ such that $x_0\succsim d_t\succsim x_1$ for all $t\in \N$. By P.5 we have $x_0\succsim d \succsim x_1$. Consider $A=\{y\in co(K_d)| \bar{y}\succsim d\}$ and $B=\{y\in co(K_d)| d\succsim \bar{y}\}$, where $\bar{y}$ denotes the constant sequence $\bar{y}=(y,y,\dots)$ and $co(K_d)$ is the convex hull of $K_d$. By P.1, $A$, and $B$ are closed  and since $x_0\in A$ and $x_1\in B$ they are both non empty. By connectedness of $X$ we have that $co(K_d)$ is connected and therefore there exists $x_d\in co(K_d)$ such that $x_d\sim d$.  
\end{proof}
Proposition \ref{prop:P1-5_to_UC_CE} and Theorem \ref{th:wakker} of Bleichrodt \textit{et al.} \cite{Ble} imply Proposition \ref{prop:DU}.
\end{proof}

\noindent \textit{\textbf{Proof of Proposition \ref{prop:Koop_nest}}}

\begin{proof}
It easy to see that $(C)\Rightarrow P.1$, $(TS)\Leftrightarrow P.3$, $(SS)\Rightarrow P.4$. 

\begin{lemma}
$(C)$, $(M)$ and $(SS)$ imply $P.2$.
\end{lemma}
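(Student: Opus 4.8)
The plan is to argue by contraposition, using the standing non-triviality of $\succsim$ (the section assumes $\succ\neq\emptyset$). Suppose P.2 fails. By completeness this means $(x,d)\sim(y,d)$ for every $x,y\in X$ and every $d\in\D$; that is, the period-$0$ outcome of a deterministic act is behaviourally irrelevant. The first observation I would record is that axiom (SS) with $t=0$ and a constant (hence $\F_0$-measurable) leading variable $h_0=x$, read over $\D$, is precisely Koopmans' K-Stationarity: $d\succsim d'\Leftrightarrow(x,d)\succsim(x,d')$ for all $x\in X$ and $d,d'\in\D$.

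Next I would show that replacing a single coordinate of a deterministic act never changes preference, by induction on the index $k$ of the coordinate being replaced. The case $k=0$ is exactly the failure of P.2; for $k\ge 1$, the two acts share their period-$0$ outcome, so K-Stationarity deletes it and the inductive hypothesis applies to the one-step-shifted acts in $\D$. Chaining finitely many such single-coordinate replacements — all intermediate acts stay in $\D$, since their coordinates lie in a fixed compact set and they are eventually constant — yields: $d\sim d'$ whenever $d,d'\in\D$ coincide from some period onward.

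To upgrade this from finitely-supported modifications to all of $\D$, I would fix $x\in X$, set $\bar x=(x,x,\dots)$, and for an arbitrary $d\in\D$ consider the truncations $d^{(T)}=(d_0,\dots,d_T,x,x,\dots)$. By the previous step $d^{(T)}\sim\bar x$ for every $T$, while $d^{(T)}\to d$ in the product topology; since all of the $d^{(T)}$ and $d$ lie in $K^\infty$ with $K=K_d\cup\{x\}$ compact, axiom (C) forces $d\sim\bar x$. Hence $\succsim$ is constant on $\D$. Finally, for an arbitrary act $h\in\hh$ each section $h(\omega)$ lies in $\D$, so $h(\omega)\sim\bar x$ for every $\omega$, where here $\bar x\in\hh$ denotes the act constant across states and time; applying (M) in both directions to $h$ and $\bar x$ gives $h\sim\bar x$. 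Thus $\succsim$ is trivial, contradicting non-triviality, and so P.2 must hold.

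The one step that requires genuine care is the continuity argument: one must exhibit a single compact $K\subseteq X$ that simultaneously contains all truncations $d^{(T)}$ and their limit $d$, so that the closed sets of axiom (C) can be invoked — this is also essentially the only place where the product-topology formulation of continuity (rather than a sup-norm version) is actually used. Everything else is routine bookkeeping with completeness, transitivity, and K-Stationarity.
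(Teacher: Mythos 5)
Your argument is correct, and it is essentially the intended one: the paper's own ``proof'' of this lemma is simply a citation to Lemma~5 of Kochov (2015), so your contribution is to make the argument explicit and self-contained. The structure --- contraposition from the standing non-triviality of $\succsim$; observing that (SS) at $t=0$ with a constant (hence $\F_0$-measurable) inserted variable reduces to K-Stationarity on $\D$; propagating period-$0$ insensitivity to every coordinate by induction and then to tail-agreeing acts by transitivity; passing to arbitrary $d\in\D$ via the truncations $d^{(T)}\to d$ inside a single compact $K^\infty$ using (C); and finally lifting from $\D$ to $\hh$ via (M) applied stateswise in both directions --- is exactly the right chain, and each step checks out (in particular, completeness does turn the failure of P.2 into full indifference $(x,d)\sim(y,d)$, and each $h(\omega)$ does lie in $\D$ by boundedness of acts). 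Two cosmetic remarks: the clause that the intermediate acts are ``eventually constant'' is unnecessary, since membership in $\D$ only requires the coordinates to lie in a common compact set; and your observation that this is a spot where product-topology continuity (as opposed to sup-norm continuity) genuinely matters is accurate, since $d^{(T)}$ need not converge to $d$ uniformly.
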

\begin{proof}
See Lemma 5 in Kochov \cite{Kochov}.
\end{proof}

\begin{lemma}
$(C)$ and $(SS)$ imply $P.5$.\footnote{A similar statement is claimed without proof in (the proof of) Lemma 7 in Kochov \cite{Kochov}.}
\end{lemma}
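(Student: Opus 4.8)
We must show that $(C)$ and $(SS)$ imply P.5 (K-Monotonicity): if $d,d'\in\D$ with $d_t\succsim d_t'$ for all $t$, then $d\succsim d'$; and if moreover $d_t\succ d_t'$ for some $t$, then $d\succ d'$. Here $d_t\succsim d_t'$ is shorthand for the constant sequences $(d_t,d_t,\dots)\succsim(d_t',d_t',\dots)$.

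**The plan.** The natural strategy is a "one coordinate at a time" argument using $(SS)$ to strip off common prefixes, combined with a limiting argument using $(C)$ for the infinite tail. First I would prove the finite-horizon version: for ultimately constant $d,d'$ that agree from some period $T+1$ on, if $d_t\succsim d_t'$ for $t\le T$ then $d\succsim d'$. This follows by induction on $T$. For the base case, I want to show $(x,e)\succsim(x',e)$ whenever $x\succsim x'$, for any fixed tail $e\in\D$ — this is a single-coordinate monotonicity statement, which one extracts from $(SS)$ together with the representation-free consequences already available, or more simply from the fact that $\succsim$ restricted to $\D$ satisfies K-Stationarity (which $(SS)$ gives) plus continuity: comparing $(x,e)$ and $(x',e)$ reduces, after shifting, to comparing $x$ and $x'$ at period $0$ against a common continuation, and one uses that period-$0$ betterness is inherited. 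For the inductive step, to compare $d=(d_0,\dots,d_T,e)$ with $d'=(d_0',\dots,d_T',e)$, interpolate through the "hybrid" act $(d_0,d_1',\dots,d_T',e)$: the first edge $(d_0,\textrm{rest})$ vs $(d_0',\textrm{rest})$ is handled by the base case with the common tail being that rest, and the remaining edge is handled by stripping $d_0$ via $(SS)$ and invoking the induction hypothesis at horizon $T-1$.

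**Passing to the infinite horizon.** Given arbitrary $d,d'\in\D$ with $d_t\succsim d_t'$ for all $t$, fix a compact $K\subset X$ containing all $d_t$ and $d_t'$ (possible by the definition of $\D$), and for each $n$ form the truncations $d^{(n)}=(d_0,\dots,d_n,d_{n+1}',d_{n+2}',\dots)$. By the finite-horizon result, $d^{(n)}\succsim d'$ for every $n$ (the two acts agree from $n+1$ on and $d_t\succsim d_t'$ for $t\le n$). Since $d^{(n)}\to d$ in the product topology on $K^{\infty}$ and $\{e\in K^{\infty}\mid e\succsim d'\}$ is closed by $(C)$, we conclude $d\succsim d'$.

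**The strict part and the main obstacle.** For the strict statement, suppose $d_{t_0}\succ d_{t_0}'$. Pick, using $(C)$ and connectedness of $X$ (as in the proof of CE in Proposition~\ref{prop:P1-5_to_UC_CE}), an outcome $x$ strictly between, so $d_{t_0}\succsim x\succ d_{t_0}'$; then the act $d''$ obtained from $d$ by replacing coordinate $t_0$ with $x$ satisfies $d\succsim d''$ (weak monotonicity just proved) and it suffices to show $d''\succ d'$. After stripping the common coordinates $d_0,\dots$ before $t_0$ via $(SS)$ — and noting the first coordinates of $d''$ and $d'$ up to $t_0-1$ need not agree, so one must instead compare $d'$ with the act having $d$'s coordinates before $t_0$ then a strictly worse coordinate, using weak monotonicity in both directions to sandwich — one reduces to a statement of the form: strict betterness at a single visible coordinate forces strict betterness of the stream. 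This last reduction is the delicate point: the natural tool is the Sensitivity postulate P.2 (already derived from $(C),(M),(SS)$ in the preceding lemma) applied after the appropriate shift, which guarantees that a strict improvement at some coordinate cannot be "washed out." I expect the bookkeeping around non-matching prefixes in the strict case to be the main obstacle; everything else is a routine prefix-stripping plus product-topology limit.
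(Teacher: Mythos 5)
Your skeleton (single-coordinate replacement, finite-horizon induction, then a product-topology limit via $(C)$ against the fixed act $d'$) is the same architecture as the paper's proof, and the weak part would indeed go through \emph{if} you had the base case. But the base case is where essentially all the work of this lemma lives, and you have not proved it. The assertion that $(x,e)\succsim(x',e)$ whenever $x\succsim x'$ does not ``reduce, after shifting, to comparing $x$ and $x'$ against a common continuation'': K-Stationarity only permits inserting or deleting a \emph{common} first coordinate, so it says nothing about two streams that differ in period $0$ and agree thereafter, and ``period-$0$ betterness is inherited'' is precisely the statement to be established, not a tool. The paper's Claim 1 proves it by a genuine argument: starting from $(x,x,\dots)\succsim(x',x',\dots)$, one iterates $(SS)$ to insert $e_0,\dots,e_t$ and obtain $(x,e_0,\dots,e_t,x,x,\dots)\succsim(x',e_0,\dots,e_t,x',x',\dots)$ for every $t$, and then passes to the limit $(x,e)\succsim(x',e)$ using $(C)$; the reverse implication requires a separate sandwiching step producing $(x,\dots,x,e)\succsim(x',\dots,x',e)$ with $n$ leading copies. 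None of this appears in your proposal, and without it the induction has nothing to stand on.

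The strict part is a second genuine gap, and the tool you reach for cannot close it. Sensitivity (P.2) is an \emph{existential} statement --- there exist \emph{some} $x,y,d$ with $(x,d)\succ(y,d)$ --- so it cannot guarantee that an arbitrary strict improvement at an arbitrary coordinate against an arbitrary tail is not washed out; worse, P.2 was derived in the preceding lemma using $(M)$, which is not among the hypotheses here, so invoking it would not prove the lemma as stated. What is actually needed is the universal statement $x\succ y\Rightarrow(x,d)\succ(y,d)$ (Claim 2 in the paper's proof), obtained by contradiction: if $(x,d)\sim(y,d)$, iterating $(SS)$ yields $(x,\dots,x,d)\sim(y,\dots,y,d)$ for every number of leading copies, and $(C)$ then forces $(x,x,\dots)\sim(y,y,\dots)$, contradicting $x\succ y$. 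Once this is in hand, the strict case is short --- apply it at the coordinate $t_0$ with $d_{t_0}\succ d_{t_0}'$, prepend the common prefix with $(SS)$, and compare the resulting act to $d'$ by the weak part --- so your detour through an intermediate outcome $x$ and the non-matching-prefix bookkeeping is both unresolved and unnecessary.
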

\begin{proof}
Recall that if $x,y\in X$, we define $x\succsim y \Leftrightarrow (x,x,\dots)\succsim (y,y,\dots)$. We introduce the notation $({}_{n}x,d)=(\underbrace{x,\dots,x}_{n-\text{times}},d)$.

\textit{Claim 1.} Let $x,y\in X$ and $d\in \D$. Then $x\succsim y\Leftrightarrow (x,d)\succsim (y,d)$.\\
($\Rightarrow$) By $(SS)$, $(x,x,\dots)\succsim (y,y,\dots) \Leftrightarrow (x,d_0,x,\dots)\succsim (y,d_0,y,\dots)$. By induction, we find that for all $t\in \N$, $(x,x,\dots)\succsim (y,y,\dots)\Leftrightarrow (x,d_0,\dots,d_t,x,x\dots)\succsim (y,d_0,\dots,d_t,y,y\dots)$. The sequence $(x,d_0,\dots,d_t,x,x\dots)$ converges to $(x,d)$ as $t\rightarrow\infty$ therefore $(C)$ implies $(x,d)\succsim(y,d)$. \\
($\Leftarrow$) By $(SS)$, $(x,d)\succsim (y,d)\Leftrightarrow (x,x,d)\succsim (y,x,d)$. If $(y,y,d)\succ (y,x,d)$, then by $(SS)$ we would have $(y,d)\succ (x,d)$, a contradiction. Therefore $(x,x,d)\succsim (y,x,d)\succsim (y,y,d)$. By induction we find that for all $n\in \N$, $({}_{n}x,d)\succsim ({}_{n}y,d)$ and by $(C)$, $(x,x,\dots)\succsim (y,y,\dots)$.\\
 This conclude the prof of Claim 1.

\textit{Claim 2.} Let $x,y\in X$ and $d\in \D$. Then $x\succ y\Leftrightarrow (x,d)\succ (y,d)$.\\
($\Rightarrow$) By Claim 1, we know that  $(x,d)\succsim (y,d)$. Suppose that $(x,d)\sim (y,d)$. Then by $(SS)$, $(x,x,d)\sim (y,x,d)$ and $(y,x,d)\sim (y,y,d)$, hence $(x,x,d)\sim  (y,y,d)$. By induction we find  that for all $n\in \N$, $({}_{n}x,d)\sim ({}_{n}y,d)$  and by $(C)$, $(x,x,\dots)\sim (y,y,\dots)$. This contradict  $x\succ y$. Therefore we must have $(x,d)\succ (y,d)$.\\
($\Leftarrow$) Suppose $(x,d)\succ (y,d)$ but $y\succsim x$. Then by Claim 1, $(y,d)\succsim (x,d)$, which is absurd. Hence $x\succ y$.

Let $d,d'\in \D$ such that $d_t\succsim d_t'$ for all $t\in \N$. Fix $n\in \N$, since $d_n\succsim d_n'$, $(SS)$ then implies $(d_{n-1},d_n,d_n,\dots)\succsim (d_{n-1},d_n',d_n',\dots)$ and Claim 1 yields $(d_{n-1},d_n,d_n,\dots)\succsim (d_{n-1}',d_n',d_n',\dots)$. Applying $n$ times this reasoning we obtain 
$$
(d_0,d_1,\dots,d_n,d_n,\dots)\succsim (d_0',d_1',\dots,d_n',d_n',\dots)
$$
Since $n$ was arbitrary, this is true for all $n\in\N$. Letting $n$ going to infinity, we notice that the sequence on the left hand-side of the preference converges to $d$, and the one to the right hand-side converges to $d'$ in the product topology. Then $(C)$ implies $d\succsim d'$.\\
Suppose now that in addition $d_t\succ d_t'$ for some $t$. By Claim 2 we have $(d_t,d_{t+1},\dots)\succ (d_t',d_{t+1},\dots)$. Applying $t$ times $(SS)$ we obtain 
$$
(d_0,\dots,d_{t-1}, d_t,d_{t+1},\dots)\succ (d_0,\dots,d_{t-1}, d_t',d_{t+1},\dots).
$$
By Claim 1, for all $n\geq t$ $(d_0,\dots,d_{t-1}, d_t',d_{t+1},\dots)\succsim (d_0',\dots,d_n',d_{n+1}\dots)$. Since the latter sequence converge to $d'$, by $(C)$ we have $(d_0,\dots,d_{t-1}, d_t',d_{t+1},\dots)\succsim d'$ and hence $d\succ d'$.
\end{proof}
Therefore we proved that $(C)$, $(TS)$, $(M)$ and $(SS)$ imply $P$.1-5.
\end{proof}

\noindent \textit{\textbf{Proof of Theorem \ref{th:CEU}}}

\begin{proof}
We first prove necessity of the axioms. Monotonicity (M) and Time Separability (TS) follow from the properties of monotonicity and comonotonic additivity of the Choquet integral. Continuity (C) is proved in Kochov \cite{Kochov} (p. 252).  Given a compact set $K\subseteq X$, Kochov shows that the fact that $V(d)=\sum_t\beta^t u(d_t)$ is continuous in the product topology on $K^{\infty}$ implies (C) for the function $h \rightarrow \min_{P\in \PA}\E_P\left[ \sum\beta^t u(h_t) \right]$. The same argument shows that (C) holds also for $h \rightarrow \int \sum \beta^t u(h_t) dv$. It is left to show Comonotonic Stationarity (CS). Fix $t\in \N$, $d\in \D$ and $f,g,h\in\hh$ such that $h_t$ is comonotonic with $f_i$ and $g_i$ for all $i\geq t$. Notice that $h_t$ is comonotonic with $f_i$ if and only if for all $\omega,\omega'\in\Omega$
$$
\left[ u(h_t(\omega))-u(h_t(\omega'))\right] \left[ u(f_i(\omega))-u(f_i(\omega'))\right]\geq 0, 
$$
and the same is true for $g_i$. Let $\omega$ and $\omega'$ be in $\Omega$, we have
\begin{multline*}
\left[ u(h_t(\omega))-u(h_t(\omega'))\right] \left[ \sum_{i\geq t} \beta^{i+1} u(f_i(\omega))-\sum_{i\geq t} \beta^{i+1} u(f_i(\omega'))\right]=\\
\lim_{n\rightarrow\infty}\sum_{i=t}^n \beta^{i+1}\left[ u(h_t(\omega))-u(h_t(\omega'))\right] \left[ u(f_i(\omega))-u(f_i(\omega'))\right]\geq 0.
\end{multline*}
Therefore $h_t$ is comonotonic with $\sum_{i\geq t} \beta^{i+1} u(f_i)$ and $\sum_{i\geq t} \beta^{i+1} u(g_i)$. Denoting $\delta=\sum_{i=1}^{t-1} \beta^i u(d_i)$, and using the fact that the Choquet integral satisfies comonotonic additivity and positive homogeneity we get
\begin{align*}
(d_0,\dots,d_{t-1},h_t,f_t,f_{t+1},\dots)&\succsim (d_0,\dots,d_{t-1},h_t,g_t,g_{t+1},\dots) \Leftrightarrow\\
\int \delta+\beta^t u(h_t)+ \sum_{i\geq t} \beta^{i+1} u(f_i) dv &\geq \int \delta+\beta^t u(h_t)+ \sum_{i\geq t} \beta^{i+1} u(g_i) dv \Leftrightarrow\\
 \delta+\int \beta^t u(h_t) dv + \beta\int \sum_{i\geq t} \beta^{i} u(f_i) dv &\geq  \delta+\int\beta^t u(h_t)dv + \beta\int\sum_{i\geq t} \beta^{i} u(g_i) dv \Leftrightarrow \\
 \int \sum_{i\geq t} \beta^{i} u(f_i) dv &\geq \int \sum_{i\geq t} \beta^{i} u(g_i) dv \Leftrightarrow\\
 \int \delta+\sum_{i\geq t} \beta^{i} u(f_i) dv &\geq \int \delta+\sum_{i\geq t} \beta^{i} u(g_i) dv \Leftrightarrow\\
 (d_0,\dots,d_{t-1},f_t,f_{t+1},\dots)&\succsim (d_0,\dots,d_{t-1},g_t,g_{t+1},\dots).
\end{align*}

We turn now to sufficiency. A simple modification of Proposition \ref{prop:Koop_nest} shows that $(C)$, $(TS)$, $(M)$ and $(CS)$ imply $P$.1-5. Hence, by Proposition \ref{prop:DU} there  exists a continuous function $u:X\rightarrow \R$ and a discount factor $\beta\in(0,1)$ such that the restriction of $\succsim$ over $\D$ is represented by the functional
$$
U(d)=(1-\beta)\sum_{t=0}^{\infty}\beta^t u(d_t).
$$
Moreover $\beta\in (0,1)$ is unique and $u$ unique up to a positive affine transformation.

%

We can notice that connectedness of $X$ and continuity of $u$ imply that $u(X)$ is an interval. By Lemma 5 of Kochov \cite{Kochov} this interval has non-empty interior.  Re-normalize w.l.g. $u$ so that $[-1,1]\subseteq Range(u)$ and let $x^*\in X$ be such that $u(x^*)=0$.

\begin{lemma}\label{lemma:equiv_act}
For every $h \in \hh$ there exists $ d_h \in \D$ s.t. $ h \sim  d_h $. 
\end{lemma}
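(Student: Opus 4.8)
The plan is to prove something slightly stronger: every act $h\in\hh$ has a \emph{constant} equivalent. Since constant sequences belong to $\D$, this gives the lemma. The three ingredients are finiteness of $h$ (to replace $h$ by finitely many deterministic ``readings''), axiom (M) (to sandwich $h$ between two deterministic acts), and continuity plus connectedness of $X$ (to upgrade the sandwich to an exact equivalence), exactly as in the proof of CE in Proposition~\ref{prop:P1-5_to_UC_CE}.

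\textbf{Step 1: sandwich $h$ between two deterministic acts.} Fix $h\in\hh$. By finiteness, $h$ is adapted to a finitely generated algebra, whose atoms $P_1,\dots,P_m$ form a finite partition of $\Omega$; each $h_t$ is constant on every atom, so the map $\omega\mapsto h(\omega)$ takes only finitely many values $d^1,\dots,d^k\in\D$ (all lying in $K_h^\infty$ by boundedness). Since $\succsim$ is complete, it is a total preorder on the finite set $\{d^1,\dots,d^k\}$, so there exist $d^{\max}$ and $d^{\min}$ among them with $d^{\max}\succsim d^j\succsim d^{\min}$ for all $j$. Regarding $d^{\max}$ and $d^{\min}$ as state-independent acts, we have $d^{\max}(\omega)\succsim h(\omega)\succsim d^{\min}(\omega)$ for all $\omega\in\Omega$, so Monotonicity (M) yields $d^{\max}\succsim h\succsim d^{\min}$. (One could instead obtain $d^{\max},d^{\min}$ as a maximizer and minimizer of $\succsim$ over the compact set $K_h^\infty$, using continuity, thereby avoiding the finiteness hypothesis; either route works.)

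\textbf{Step 2: pass to constants.} By the representation already obtained on $\D$ (equivalently by CE, Proposition~\ref{prop:P1-5_to_UC_CE}), there are $x^+,x^-\in X$ with $\bar x^+\sim d^{\max}$ and $\bar x^-\sim d^{\min}$, where $\bar x$ denotes the constant sequence $(x,x,\dots)$. Transitivity then gives $\bar x^+\succsim h\succsim\bar x^-$.

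\textbf{Step 3: connectedness.} Let $A=\{x\in X:\bar x\succsim h\}$ and $B=\{x\in X:h\succsim\bar x\}$. Completeness gives $A\cup B=X$, and Step 2 gives $x^+\in A$, $x^-\in B$, so both sets are nonempty. They are closed: if $x^n\to x$ in $X$ with each $\bar x^n\succsim h$, set $K=\{x^n:n\in\N\}\cup\{x\}$, which is a compact subset of $X$; then $\bar x^n\to\bar x$ in the product topology on $K^\infty$, and since $\{d\in K^\infty:d\succsim h\}$ is closed by (C), we get $\bar x\succsim h$, i.e. $x\in A$; the argument for $B$ is symmetric. As $X$ is connected, $A\cap B\neq\emptyset$, so there is $y\in X$ with $\bar y\sim h$; take $d_h=\bar y\in\D$.

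The only genuinely delicate point is Step~3: axiom (C) is stated only for restrictions to the cubes $K^\infty$ with $K\subseteq X$ compact, so to transfer closedness to the subsets $A,B$ of $X$ one must produce the appropriate compact set, namely the tail of a convergent sequence together with its limit. Once that manoeuvre is in place, connectedness of $X$ closes the argument, precisely as in the treatment of CE.
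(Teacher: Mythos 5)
Your proof is correct and is essentially the standard argument: the paper itself disposes of this lemma by citing Lemma~8 of Kochov \cite{Kochov}, whose proof is the same sandwich-via-(M)-plus-connectedness construction you give (finiteness to reduce $h(\cdot)$ to finitely many deterministic values, Monotonicity to bound $h$ between two of them, the certainty-equivalent property on $\D$, and then connectedness and first-countability of $X$ together with (C) to close the gap). Your handling of the only delicate point --- that (C) is stated only on cubes $K^\infty$, so closedness of $A$ and $B$ must be checked by embedding each convergent sequence and its limit in a compact $K$ --- is exactly right.
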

\begin{proof}
See Lemma 8 of Kochov \cite{Kochov}.
\end{proof}

Define the function $V:\hh\rightarrow \R$ by $V(h ):=U( d_h )$. Since $U$ represents preferences over $\D$, the function $V$ is well defined and represents  preferences over $\hh$.  Consider now the set 
$$
\U:=\{U\circ h:= \sum_t\beta^t u(h_t) | h:=(h_t)_t\in \hh\}.
$$
For every $h\in\hh$, $U\circ h\in\U$ is a function $U\circ h:\Omega\rightarrow \R$ and will be denoted by capital letters. Kochov \cite{Kochov} refers to these functions as \textit{util act}.

Define now the function $I:\U\rightarrow\R$ as $I(H):=V(h)$ where $h\in \hh$ is such that $U\circ h=H$. Notice that $I$ is well defined by monotonicity: if $H=U\circ h=U\circ h'$ then $h(\omega)\sim h'(\omega)$ for all $\omega\in \Omega$ , by (M) $h\sim h'$ and therefore $V(h)=V(h')$.

Recall that $\F=\cup_t\F_t$. We denote $B^o:=B^o(\Omega,\F,\R)$, i.e. the set of simple, real-valued $\F$-measurable functions over $\Omega$. Given a set $A\in \F$, $1_A\in B^o$ denotes the indicator function of the set $A$.

\begin{lemma}\label{lemma:absorbing_set}
For all $a\in B^o$, there exists $\delta>0$ such that $a\in \delta \U$, i.e. $\U$ is an absorbing subset of $B^o$.
\end{lemma}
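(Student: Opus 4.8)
The plan is to show that an arbitrary simple $\F$-measurable function $a\in B^o$ can be written, after scaling by a small positive constant $\delta$, as $U\circ h=\sum_t\beta^t u(h_t)$ for some act $h\in\hh$. The key structural facts I would exploit are: (i) $a$ takes finitely many values and is measurable with respect to some finitely generated subalgebra of $\F$, hence with respect to some $\F_T$ (since $\F=\cup_t\F_t$ and $a$ is $\F$-measurable with finite range, a finite generating set of atoms lies in finitely many $\F_t$, so in $\F_T$ for $T$ large); (ii) the renormalization already fixed in the text guarantees $[-1,1]\subseteq\mathrm{Range}(u)$, with $x^*\in X$ satisfying $u(x^*)=0$; (iii) since $u(X)$ is an interval containing $[-1,1]$, for any real number $r$ with $|r|$ not too large there is $x\in X$ with $u(x)=r$.

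First I would pick $T$ so that $a$ is $\F_T$-measurable, and choose $\delta>0$ small enough that $\delta\cdot\|a\|_\infty\le (1-\beta)$ — or more simply small enough that $\delta a(\omega)/\beta^{\,T}$ lies in the range of $u$ for every $\omega$; since $a$ has finite range this is a finite condition and such $\delta$ exists. Then I would define the act $h$ by setting $h_T$ to be an $\F_T$-measurable selection with $u(h_T(\omega))=\delta a(\omega)/\beta^{\,T}$ for all $\omega$ (possible by the intermediate value property of $u$ applied value-by-value on the finitely many atoms of $\F_T$), and setting $h_t=x^*$ (the constant act with $u(x^*)=0$) for every $t\ne T$. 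This $h$ is finite (it is $\A$-measurable for the finitely generated algebra generated by the atoms of $\F_T$) and bounded (its range is a finite subset of $X$), so $h\in\hh$. Then $U\circ h=\sum_t\beta^t u(h_t)=\beta^T u(h_T)=\beta^T\cdot\delta a/\beta^{\,T}=\delta a$, which gives $a=\tfrac1\delta (U\circ h)\in\tfrac1\delta\,\U$, i.e. $a\in\delta'\,\U$ with $\delta'=1/\delta$ — so, relabelling, $\U$ is absorbing.

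The main obstacle I anticipate is purely bookkeeping rather than conceptual: one must be careful that the chosen scaling $\delta$ works \emph{uniformly} over all values of $a$ (handled by finiteness of the range) and that the selection $h_T$ can genuinely be taken $\F_T$-measurable — this is where finiteness of $a$ is essential, since on each atom of the finite algebra $\F_T$ the function $a$ is constant and we just need one preimage point of $u$, avoiding any measurable-selection subtlety. A secondary point to state cleanly is that $h$ as constructed satisfies the standing \emph{boundedness and finiteness} Assumption, so that it really is an element of $\hh$ and $U\circ h$ really is an element of $\U$; both follow immediately because $h$ takes only finitely many values in $X$ and is adapted to a single finitely generated algebra. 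No deeper input (no continuity, no stationarity) is needed here.
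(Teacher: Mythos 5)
Your construction is correct and is essentially the argument behind the paper's proof, which simply cites Lemma 9 of Kochov \cite{Kochov}: scale $a$ by $\beta^{T}/\|a\|_\infty$ so its values land in $[-1,1]\subseteq \mathrm{Range}(u)$, plant a $u$-preimage at a single period $T$ large enough that $a$ is $\F_T$-measurable, and pad with $x^*$ elsewhere. Your handling of measurability (via the finitely many level sets of the simple function $a$) and of the boundedness/finiteness requirements for membership in $\hh$ is exactly what is needed, so the proposal stands as written.
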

\begin{proof}
See Lemma 9 of Kochov \cite{Kochov}.
\end{proof}


The next lemma extends $I:\U\rightarrow\R$ to $\tilde{I}:B^o\rightarrow \R$ and shows that $\tilde{I}$ is translation invariant and $\beta$-homogeneous. 

\begin{lemma}[$\tilde{I}$ is translation invariant]\label{lemma:extension_I}
There exists a unique functional $\tilde{I}:B^o\rightarrow \R$ such that the restriction $\tilde{I}|_{\U}$ of $\tilde{I}$ on $\U$ is such that $\tilde{I}|_{\U}=I$. Moreover for every $a\in B^o$, for every $\alpha\in \R$, $\tilde{I}(\beta a)=\beta\tilde{I}(a)$ and  $\tilde{I}(a+\alpha)=\tilde{I}(a)+\alpha$.
\end{lemma}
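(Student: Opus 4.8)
\emph{Proof plan.} The whole construction rests on a single Koopmans-type recursion for $I$ on $\U$: for every $H\in\U$ and every $\gamma$ in the range of $u$,
\[
I(\beta H+\gamma)=\gamma+\beta\,I(H).
\]
To obtain this identity, write $U(d)=\sum_t\beta^t u(d_t)$ for the deterministic representation of Proposition~\ref{prop:DU} (the multiplicative normalization constant is irrelevant), and recall $U\circ h=\sum_t\beta^t u(h_t)$, so that $U\circ d=U(d)$ on $\D$ and $U\circ(x,h)=u(x)+\beta(U\circ h)$. Fix $h\in\hh$ with $U\circ h=H$ and $x\in X$. Since a constant outcome is comonotonic with every random variable, axiom (CS) at $t=0$ gives $f\succsim g\Leftrightarrow(x,f)\succsim(x,g)$ for all $f,g\in\hh$; applying this with $g$ the deterministic certainty equivalent $d_h\sim h$ supplied by Lemma~\ref{lemma:equiv_act} yields $(x,h)\sim(x,d_h)$, hence $V(x,h)=V(x,d_h)$. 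The act $(x,d_h)$ is deterministic, so Proposition~\ref{prop:DU} evaluates it explicitly and gives $V(x,d_h)=u(x)+\beta V(d_h)=u(x)+\beta I(H)$, while by definition $V(x,h)=I(U\circ(x,h))=I(u(x)+\beta H)$. Setting $\gamma:=u(x)$ and letting $x$ range over $X$ proves the recursion; taking $\gamma=0$ (recall $u(x^{*})=0$) gives in particular $I(\beta H)=\beta I(H)$ for all $H\in\U$.

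\textbf{The extension.} I next record that for every $a\in B^{o}$ there is an integer $n$ with $\beta^{n}a\in\U$: since $a$ is simple and $\F$-measurable it is $\F_{T}$-measurable and bounded for some finite $T$, and for $n$ large enough that $\beta^{n}a$ takes values in a neighbourhood of $0$ contained in $u(X)$ (which exists, since after the normalization $[-1,1]\subseteq\mathrm{Range}(u)$) one realizes $\beta^{n}a$ as the util act of the process equal to $x^{*}$ before date $T$ and, from date $T$ on, constantly equal to an outcome whose utility is $(1-\beta)\beta^{-T}\beta^{n}a$; this process is adapted, bounded and finite. (Equivalently, one may combine Lemma~\ref{lemma:absorbing_set} with the fact that $\U$ is star-shaped about the origin, again because $u(X)$ is an interval containing $0$.) Define $\tilde I(a):=\beta^{-n}I(\beta^{n}a)$ for any such $n$. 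Well-definedness follows from $I(\beta\,\cdot)=\beta I(\cdot)$ on $\U$: if $\beta^{n}a,\beta^{m}a\in\U$ with $m\ge n$, then $\beta^{m}a=\beta^{m-n}(\beta^{n}a)$, and iterating gives $\beta^{-m}I(\beta^{m}a)=\beta^{-n}I(\beta^{n}a)$. Taking $n=0$ shows $\tilde I|_{\U}=I$.

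\textbf{The two identities and uniqueness.} $\beta$-homogeneity of $\tilde I$ on all of $B^{o}$ is immediate: choosing $n$ with $\beta^{n+1}a\in\U$, $\tilde I(\beta a)=\beta^{-n}I(\beta^{n+1}a)=\beta\cdot\beta^{-(n+1)}I(\beta^{n+1}a)=\beta\tilde I(a)$. For translation invariance, given $a\in B^{o}$ and $\alpha\in\R$, pick $n$ large enough that $\beta^{n-1}a\in\U$ and $\beta^{n}\alpha$ lies in the range of $u$ (possible since $\beta^{n}\alpha\to0$ and $[-1,1]\subseteq\mathrm{Range}(u)$); applying the recursion with $H=\beta^{n-1}a$ and $\gamma=\beta^{n}\alpha$,
\[
\tilde I(a+\alpha)=\beta^{-n}I\bigl(\beta(\beta^{n-1}a)+\beta^{n}\alpha\bigr)=\beta^{-n}\bigl(\beta^{n}\alpha+\beta\,I(\beta^{n-1}a)\bigr)=\alpha+\beta^{-(n-1)}I(\beta^{n-1}a)=\alpha+\tilde I(a).
\]
Uniqueness holds among $\beta$-homogeneous extensions: if $J\colon B^{o}\to\R$ extends $I$ and satisfies $J(\beta\,\cdot)=\beta J(\cdot)$, then for $n$ with $\beta^{n}a\in\U$ we have $\beta^{n}J(a)=J(\beta^{n}a)=I(\beta^{n}a)$, whence $J=\tilde I$.

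\textbf{Main obstacle.} The conceptual crux is the recursion, and specifically the legitimacy of invoking (CS) at date $0$ with a \emph{constant} first-period outcome (so that the comonotonicity proviso is vacuous) together with Lemma~\ref{lemma:equiv_act}; once that is in hand, everything reduces to bookkeeping about which power of $\beta$ brings a given simple function into $\U$, using only that $\mathrm{Range}(u)$ is an interval containing a neighbourhood of $0$.
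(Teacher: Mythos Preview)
Your argument is correct. The paper itself does not give a proof here; it simply cites Lemmas~10--12 of Kochov~\cite{Kochov}. You have supplied a self-contained proof whose structure is exactly the natural one (and almost certainly coincides with Kochov's): establish the Koopmans recursion $I(\gamma+\beta H)=\gamma+\beta I(H)$ on $\U$ via (CS) at $t=0$ with a constant first coordinate and Lemma~\ref{lemma:equiv_act}; use it to get $I(\beta H)=\beta I(H)$; define $\tilde I(a):=\beta^{-n}I(\beta^{n}a)$ for $n$ large, checking well-definedness via the fact that $\U$ is absorbing and stable under multiplication by $\beta$; and read off $\beta$-homogeneity and translation invariance from the recursion. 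Your construction of an act realizing $\beta^{n}a$ and the bookkeeping for translation invariance are both fine. The only cosmetic point is that the lemma's uniqueness clause is phrased unconditionally, whereas you (correctly) prove uniqueness among $\beta$-homogeneous extensions; that is the intended reading, since an arbitrary extension of $I$ from $\U$ to $B^{o}$ is obviously not unique.
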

\begin{proof}
See Lemma 10, Lemma 11 and Lemma 12 of Kochov \cite{Kochov}.
\end{proof}

\begin{lemma}\label{lemma:como_add_preference}
Let $a,b,c\in B^o$ be such that $c$ is comonotonic with $a$ and $b$. Then $\tilde{I}(a)=\tilde{I}(b)\Leftrightarrow \tilde{I}(a+c)=\tilde{I}(b+c)$. 
\end{lemma}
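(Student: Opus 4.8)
The plan is to obtain the statement directly from Comonotonic Stationarity (CS), read as a comonotonic--additivity property of $\tilde I$, by encoding $a$, $b$ and $c$ as coordinates of genuine acts in $\hh$. I will use freely that $V(h)=\tilde I(U\circ h)$ with $U\circ h=\sum_t\beta^t u(h_t)$, the translation invariance and $\beta$-homogeneity of $\tilde I$ (Lemma~\ref{lemma:extension_I}), that $[-1,1]\subseteq\mathrm{Range}(u)$ with $u(x^*)=0$, and that two $X$-valued random variables are comonotonic if and only if their $u$-images are (since $u$ represents $\succsim$ on $X$).

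First, a normalization. Because $\tilde I(\beta a)=\beta\tilde I(a)$ for all $a$, each of the equalities $\tilde I(a)=\tilde I(b)$ and $\tilde I(a+c)=\tilde I(b+c)$ is equivalent to the same equality for $\beta^n a,\beta^n b,\beta^n c$, and scaling by the positive number $\beta^n$ preserves comonotonicity of $c$ with $a$ and $b$. Fixing $M\ge 1$ with $a,b,c$ all $\F_M$-measurable (possible, each being a simple $\F$-measurable function), and choosing $n$ large enough that $\|\beta^n a\|_\infty,\|\beta^n b\|_\infty,\|\beta^n c\|_\infty\le\beta^M$, I may therefore assume from now on that $\|a\|_\infty,\|b\|_\infty,\|c\|_\infty\le\beta^M$.

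Now the encoding: choose $\F_M$-measurable simple random variables $h_M,f_M,g_M$ with $u\circ h_M=c/\beta^{M-1}$, $u\circ f_M=a/\beta^M$ and $u\circ g_M=b/\beta^M$ (well defined by measurable selection, thanks to the bounds above), so that $h_M$ is comonotonic with both $f_M$ and $g_M$. Let $f$ be the act equal to $x^*$ in every slot except slot $M$, where it equals $f_M$, define $g$, $h$ analogously, and put $d=(x^*,x^*,\dots)\in\D$; then $f,g,h\in\hh$, and $h_M$ is comonotonic with $f_i$ and $g_i$ for every $i\ge M$ (all remaining coordinates being constant). A short computation of util acts gives $U\circ(d_0,\dots,d_{M-1},f_M,f_{M+1},\dots)=a$ and $U\circ(d_0,\dots,d_{M-1},h_M,f_M,f_{M+1},\dots)=\beta c+\beta a$, and similarly with $(b,g)$ in place of $(a,f)$. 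Hence (CS) at $t=M$, rewritten via $V=\tilde I\circ U$ and $\beta$-homogeneity, says exactly
\[
\tilde I(a)\ge\tilde I(b)\ \Longleftrightarrow\ \beta\,\tilde I(a+c)\ge\beta\,\tilde I(b+c);
\]
dividing by $\beta>0$ gives $\tilde I(a)\ge\tilde I(b)\Leftrightarrow\tilde I(a+c)\ge\tilde I(b+c)$, and applying this to both inequalities yields the claimed equivalence of equalities (with the monotone version obtained for free).

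The only delicate point is the normalization step: one must shrink $a,b,c$ to the scale $\beta^M$ without disturbing either the equalities at stake or the comonotonicity relation, and it is exactly $\beta$-homogeneity of $\tilde I$ together with scale-invariance of comonotonicity that makes this legitimate. Everything else — the measurable selection of $f_M,g_M,h_M$, verifying that the constructed sequences are bounded and finite (hence honest acts), and the index bookkeeping in the util acts — is routine given the standing hypotheses on $X$ and $u$.
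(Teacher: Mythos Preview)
Your proof is correct and follows essentially the same approach as the paper's: encode $a,b,c$ as single-slot acts (with $x^*$ elsewhere), apply (CS) at the relevant time, and use $\beta$-homogeneity to undo the scaling. Your choice of $u\circ h_M=c/\beta^{M-1}$ rather than $c/\beta^M$ is a small streamlining, giving $\beta(a+c)$ directly and avoiding the paper's final substitution $a\mapsto a/\beta$; otherwise the two arguments coincide.
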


\begin{proof}
Fix $a,b,c\in B^o$ such that $c$ is comonotonic with $a$ and $b$. Since $a,b,c$ are in $B^o$ there exists $t_1\in \N$ such that $a,b,c$ are $\F_{t_1}$-measurable. Moreover there is $t_2\in \N$ such that the range of $\beta^{t_2}a$, $\beta^{t_2}b$ and $\beta^{t_2}c$ is included in $[-1,1]$. Pick $n\geq \max\{t_1,t_2\}$ and  define for all $t\in \N$ and for all $\omega\in \Omega$
\begin{equation}\label{eq:phi(h)=a}
f_t(\omega):=\begin{cases}
x^* & \text{ if } t\neq n\\
u^{-1}(\beta^{n} a(\omega)) &  \text{ if } t=n.
\end{cases}
\end{equation}
Notice that $f \in \hh$ since it is finite (because $a$ is finite), and $u^{-1}(\beta^{n} a(\omega))$ is $\F_n$-measurable since  $\beta^{n} a(\omega)$ is $\F_t$-measurable and $\F_t\subseteq\F_n$.
In the same way define $g$ and $h$ using $b$ and $c$ respectively in the place of $a$. We have that $U\circ f= \beta^{2n}a$, $U\circ g= \beta^{2n}c$ and $U\circ h= \beta^{2n}c$. Hence
$$
\tilde{I}(a)=\tilde{I}(b)\Leftrightarrow \tilde{I}(\beta^{2n}a)=\tilde{I}(\beta^{2n}b)\Leftrightarrow I(U\circ f)=I(U\circ g) \Leftrightarrow V(f)= V(g)\Leftrightarrow f\sim g.
$$
Notice now that $h_n$ is comonotonic with $f_n$ and $g_n$ (and with $x^*$) and therefore by (CS) $f\sim g$ iff
$$
f^h:=(x^*,\dots,x^*,\underbrace{h_n}_n,\underbrace{f_n}_{n+1},x^*,\dots)\sim (x^*,\dots,x^*,\underbrace{h_n}_n,\underbrace{g_n}_{n+1},x^*,\dots)=:g^h
$$
and therefore $V(f^h)=V(g^h)$. Since $U\circ f^h=\beta^{2n}c+\beta^{2n+1}a$ and $U\circ g^h=\beta^{2n}c+\beta^{2n+1}b$ then $\tilde{I}(\beta^{2n}c+\beta^{2n+1}a)=\tilde{I}(\beta^{2n}c+\beta^{2n+1}b)$. Therefore (using Lemma \ref{lemma:extension_I}) we proved that $\tilde{I}(a)=\tilde{I}(b)\Leftrightarrow \tilde{I}(c+\beta a)=\tilde{I}(c+\beta b)$. However by Lemma \ref{lemma:extension_I} we have
$$
\tilde{I}(a)=\tilde{I}\left(\frac{\beta}{\beta}a\right)=\beta \tilde{I}\left(\frac{1}{\beta}a\right)\Leftrightarrow \tilde{I}\left(\frac{1}{\beta}a\right)=\frac{1}{\beta}\tilde{I}(a)
$$
and therefore
$$
\tilde{I}(a)=\tilde{I}(b)\Leftrightarrow \tilde{I}(\frac{1}{\beta}a)=\tilde{I}(\frac{1}{\beta}b)\Leftrightarrow \tilde{I}(c+\beta \frac{1}{\beta}a)=\tilde{I}(c+\beta \frac{1}{\beta}b)\Leftrightarrow \tilde{I}(a+c)=\tilde{I}(b+c)
$$
\end{proof}

We will prove now that $\tilde{I}$ satisfies comonotonic additivity on $B^o$.

\begin{lemma}[$\tilde{I}$ satisfies comonotonic additivity]\label{lemma:como_add_I}
Let $a,b\in B^o$ be comonotonic, then $\tilde{I}(a+b)=\tilde{I}(a)+\tilde{I}(b).$
\end{lemma}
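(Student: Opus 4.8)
The plan is to reduce comonotonic additivity of $\tilde{I}$ to the comonotonic-cancellation property already established in Lemma~\ref{lemma:como_add_preference}, combined with the translation invariance and $\beta$-homogeneity of $\tilde{I}$ from Lemma~\ref{lemma:extension_I}. The standard trick (as in Schmeidler's derivation of Choquet additivity from comonotonic independence) is: given comonotonic $a,b\in B^o$, one would like to produce, for each value of the "target" $\tilde{I}(a+b)$, a constant $\gamma$ such that $\tilde{I}(a)=\tilde{I}(b+\gamma)$ after suitable rescaling, and then invoke Lemma~\ref{lemma:como_add_preference} to push the constant through. Concretely, first I would use translation invariance to reduce to the case $\tilde{I}(a)=\tilde{I}(b)$: indeed, set $\alpha:=\tilde{I}(a)$, $\alpha':=\tilde{I}(b)$, and replace $b$ by $b':=b-\alpha'+\alpha$; note $b'$ is still comonotonic with $a$ (adding a constant preserves comonotonicity), $\tilde{I}(b')=\tilde{I}(b)-\alpha'+\alpha=\alpha=\tilde{I}(a)$, and $\tilde{I}(a+b)=\tilde{I}(a+b'+\alpha'-\alpha)=\tilde{I}(a+b')+\alpha'-\alpha$, so it suffices to prove $\tilde{I}(a+b')=\tilde{I}(a)+\tilde{I}(b')=2\alpha$.

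So assume $\tilde{I}(a)=\tilde{I}(b)=\alpha$ with $a,b$ comonotonic. I would now want to conclude $\tilde{I}(a+b)=2\alpha$. Apply Lemma~\ref{lemma:como_add_preference} with the roles $a\mapsto a$, $b\mapsto b$, $c\mapsto a$ (valid since $a$ is trivially comonotonic with itself and with $b$): from $\tilde{I}(a)=\tilde{I}(b)$ we get $\tilde{I}(a+a)=\tilde{I}(b+a)$, i.e. $\tilde{I}(a+b)=\tilde{I}(2a)$. Similarly, applying it with $c\mapsto b$ gives $\tilde{I}(a+b)=\tilde{I}(2b)$. Now it remains to show $\tilde{I}(2a)=2\tilde{I}(a)$; once that is in hand, $\tilde{I}(a+b)=\tilde{I}(2a)=2\tilde{I}(a)=\tilde{I}(a)+\tilde{I}(b)$, completing the reduced case and hence the lemma.

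The step I expect to be the main obstacle is precisely establishing positive homogeneity, or at least $\tilde{I}(2a)=2\tilde{I}(a)$, from $\beta$-homogeneity and translation invariance. The available scaling is only by $\beta\in(0,1)$ (and its powers and inverse powers), so one cannot directly multiply by $2$. The route I would take: since $\tilde{I}(\beta a)=\beta\tilde{I}(a)$, iterating gives $\tilde{I}(\beta^k a)=\beta^k\tilde{I}(a)$ for all $k\in\mathbb{Z}$, hence $\tilde{I}(\lambda a)=\lambda\tilde{I}(a)$ for every $\lambda$ in the multiplicative group generated by $\beta$. The key new input is the comonotonic cancellation from Lemma~\ref{lemma:como_add_preference}: for any $\lambda,\mu>0$ with $\lambda a$ comonotonic with $\mu a$ (always true, scaling a function by positive constants preserves comonotonicity), if $\tilde{I}(\lambda a)=\tilde{I}(\mu a)$ then $\tilde{I}((\lambda+\mu)a)=\tilde{I}(2\mu a)=\tilde{I}(2\lambda a)$; chaining such identities together with the $\beta^k$-scalings, plus a continuity/density argument using that the powers of $\beta$ are dense-enough multiplicatively and that $\tilde{I}$ is monotone (inherited from (M)) hence continuous in the sup-norm, one pins down $\tilde{I}(ra)=r\tilde{I}(a)$ for all rationals $r>0$ and then all reals $r>0$. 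I would isolate this as a short sub-claim ("$\tilde{I}$ is positively homogeneous") proved by the scaling-plus-cancellation-plus-monotone-continuity argument, and then the additivity statement follows in two lines as above.
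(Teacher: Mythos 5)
Your reduction to the case $\tilde{I}(a)=\tilde{I}(b)$ and the application of Lemma~\ref{lemma:como_add_preference} with $c=a$ to get $\tilde{I}(a+b)=\tilde{I}(2a)$ are both fine, but the argument then hinges entirely on $\tilde{I}(2a)=2\tilde{I}(a)$, and the mechanism you sketch for that step does not work as stated. The multiplicative group generated by $\beta$ is the discrete set $\{\beta^k:k\in\mathbb{Z}\}$, so ``the powers of $\beta$ are dense-enough multiplicatively'' is false; and your chaining identity ``if $\tilde{I}(\lambda a)=\tilde{I}(\mu a)$ then $\tilde{I}((\lambda+\mu)a)=\tilde{I}(2\mu a)$'' has a hypothesis that generically fails for $\lambda\neq\mu$ (once homogeneity holds it forces $\lambda=\mu$ unless $\tilde{I}(a)=0$), so it generates nothing new. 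A density argument can be salvaged, but only via \emph{sums} of powers of $\beta$, and even then each step needs an extra idea you have not supplied: replacing one copy of a scaled $a$ by the \emph{constant function} with the same $\tilde{I}$-value before invoking Lemma~\ref{lemma:como_add_preference}. As written, the key sub-claim is asserted rather than proved. (Note also that the paper deliberately derives positive homogeneity, Lemma~\ref{lemma:pos_homog_I}, \emph{from} comonotonic additivity plus monotonicity, \`a la Schmeidler; your route reverses that order and is the hard direction.)

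The constant-function device is in fact all you need, and it makes the whole lemma a two-line argument with no homogeneity at all — this is the paper's proof. By translation invariance (Lemma~\ref{lemma:extension_I}), $\tilde{I}(a)=\tilde{I}\bigl(\tilde{I}(a)\,1_{\Omega}\bigr)$. The constant function $\tilde{I}(a)\,1_{\Omega}$ is comonotonic with everything, and $b$ is comonotonic with $a$ by hypothesis, so Lemma~\ref{lemma:como_add_preference} applies with the pair $\bigl(a,\ \tilde{I}(a)\,1_{\Omega}\bigr)$ and $c=b$, giving $\tilde{I}(a+b)=\tilde{I}\bigl(\tilde{I}(a)+b\bigr)$; translation invariance then yields $\tilde{I}\bigl(\tilde{I}(a)+b\bigr)=\tilde{I}(a)+\tilde{I}(b)$. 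You should replace your doubling-plus-density argument by this comparison of $a$ with its own certainty-equivalent constant; the same trick also proves your sub-claim $\tilde{I}(2a)=2\tilde{I}(a)$ directly if you insist on that route, by taking $b=\tilde{I}(a)\,1_{\Omega}$ and $c=a$.
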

\begin{proof}
Take $a,b\in B^o$ s.t. $a$ is comonotonic with $b$. By Lemma \ref{lemma:extension_I} (translation invariance) $\tilde{I}(a)=\tilde{I}(0+\tilde{I}(a))$. Since $b$ is comonotonic with $a$ and with the constant function $\tilde{I}(a)$,  $\tilde{I}(a+b)=\tilde{I}(\tilde{I}(a) +b)=\tilde{I}(a)+\tilde{I}(b)$ the first equality coming from Lemma \ref{lemma:como_add_preference} and the second one from  Lemma \ref{lemma:extension_I}.
\end{proof}

Let $a,b\in B^o$, then $a\geq b$ means $a(\omega)\geq b(\omega)$ for all $\omega\in \Omega$.  We will prove now that $\tilde{I}$ is monotone.

\begin{lemma}[$\tilde{I}$ is monotone]\label{lemma:monoton_I}
Let $a,b\in B^o$ be such that $a\geq b$, then $\tilde{I}(a)\geq\tilde{I}(b)$.
\end{lemma}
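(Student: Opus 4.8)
The plan is to deduce monotonicity of $\tilde{I}$ from the monotonicity axiom (M) on preferences, passing through the representation $V(h)=U(d_h)$ and the util acts, much as in Kochov's argument. First I would reduce to the case where $a,b\in\U$, i.e. $a=U\circ f$ and $b=U\circ g$ for suitable $f,g\in\hh$. This reduction is legitimate because $\U$ is absorbing in $B^o$ (Lemma \ref{lemma:absorbing_set}) and $\tilde{I}$ is $\beta$-homogeneous (Lemma \ref{lemma:extension_I}): given arbitrary $a\geq b$ in $B^o$, multiplying by a large enough power $\beta^{n}$ brings both into $\U$ while preserving the inequality $\beta^{n}a\geq\beta^{n}b$, and $\tilde{I}(\beta^n a)=\beta^n\tilde{I}(a)$, $\tilde{I}(\beta^n b)=\beta^n\tilde{I}(b)$, so the desired inequality for $a,b$ follows from the one for $\beta^n a,\beta^n b$. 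Concretely, exactly as in \eqref{eq:phi(h)=a}, for $n$ large I would set $f_t=x^*$ for $t\neq n$ and $f_n=u^{-1}(\beta^{n}a)$ (well-defined since $a$ is simple, $\F$-measurable, and eventually inside $Range(u)$ after scaling), and similarly $g$ from $b$; then $U\circ f=\beta^{2n}a$ and $U\circ g=\beta^{2n}b$.

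Next I would check the pointwise preference inequality. Since $a\geq b$ pointwise, $\beta^{n}a(\omega)\geq\beta^{n}b(\omega)$ for every $\omega$, hence $u(f_n(\omega))\geq u(g_n(\omega))$, i.e. $f_n(\omega)\succsim g_n(\omega)$ in the induced order on $X$; and $f_t(\omega)=g_t(\omega)=x^*$ for $t\neq n$. By P.5 (K-Monotonicity), which holds by Proposition \ref{prop:Koop_nest} / the modification noted in the proof of Theorem \ref{th:CEU}, applied coordinatewise to the deterministic sequences, we get $f(\omega)\succsim g(\omega)$ for every $\omega\in\Omega$. Then axiom (M) gives $f\succsim g$, hence $V(f)\geq V(g)$, i.e. $I(U\circ f)\geq I(U\circ g)$, i.e. $\tilde{I}(\beta^{2n}a)\geq\tilde{I}(\beta^{2n}b)$. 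Dividing by $\beta^{2n}>0$ using $\beta$-homogeneity of $\tilde{I}$ yields $\tilde{I}(a)\geq\tilde{I}(b)$, as desired.

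I do not expect a serious obstacle here; the lemma is essentially a bookkeeping exercise combining the absorption/homogeneity machinery already established with axiom (M). The one point deserving care is the measurability and range bookkeeping when defining $f$ and $g$ from arbitrary $a,b\in B^o$: one must choose $n$ simultaneously large enough that $\beta^n a$ and $\beta^n b$ are $\F_n$-measurable (automatic, since $a,b$ are $\F_{t_1}$-measurable for some $t_1$ and $\F_{t_1}\subseteq\F_n$) and that $\beta^n a(\Omega)\cup\beta^n b(\Omega)\subseteq Range(u)$ so that $u^{-1}$ makes sense — this uses that $Range(u)$ is an interval with nonempty interior (so, after the normalization $[-1,1]\subseteq Range(u)$, a high enough power of $\beta$ shrinks both simple functions into $[-1,1]$). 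With that in hand the argument is immediate.
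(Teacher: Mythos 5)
Your proof is correct and follows essentially the same route as the paper: scale $a\geq b$ into $\U$ via the absorbing property, realize the scaled util acts by acts $f,g$, deduce the pointwise preference $f(\omega)\succsim g(\omega)$, apply axiom (M) to get $f\succsim g$, and undo the scaling with $\beta$-homogeneity. The only cosmetic difference is that you build $f,g$ explicitly (as in the comonotonic-additivity lemma) and pass through P.5, where the paper simply invokes the existence of $f,g$ with $U\circ f=\beta^n a$, $U\circ g=\beta^n b$ and reads the pointwise preference directly off the representation of $\succsim$ on $\D$.
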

\begin{proof}
By Lemma \ref{lemma:absorbing_set} there is $n\in \N$ such that $\beta^n a,\beta^n b\in \U$. Let $f,g\in \hh$ be such that $U\circ f=\beta^n a$ and $U\circ g=\beta^n b$. Then $U\circ f(\omega)\geq U\circ g(\omega)$ for all $\omega\in \Omega$ and by monotonicity $f\succsim g$. Hence $V(f)\geq V(g) \Leftrightarrow \tilde{I}(\beta^n a)\geq\tilde{I}(\beta^n a)\Leftrightarrow \tilde{I}(a)\geq\tilde{I}(a)$ (where the last equivalence comes from Lemma \ref{lemma:extension_I}).
\end{proof}

We will prove now that $\tilde{I}$ is positively homogeneous.

\begin{lemma}[$\tilde{I}$ is positively homogeneous]\label{lemma:pos_homog_I}
For all $\alpha\geq 0$, for all $a\in B^o$ $\tilde{I}(\alpha a)=\alpha\tilde{I}(a)$.
\end{lemma}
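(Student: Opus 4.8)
The plan is to bootstrap positive homogeneity from three facts already in hand: comonotonic additivity (Lemma~\ref{lemma:como_add_I}), monotonicity (Lemma~\ref{lemma:monoton_I}), and the combination of $\beta$-homogeneity and translation invariance (Lemma~\ref{lemma:extension_I}). The scheme is: first rational scalars, then nonnegative $a$ with arbitrary $\alpha$, then general $a$.

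First I would dispose of rational multipliers. Any $a\in B^o$ is comonotonic with itself, so Lemma~\ref{lemma:como_add_I} gives $\tilde I(2a)=2\tilde I(a)$, and an immediate induction yields $\tilde I(ma)=m\tilde I(a)$ for every positive integer $m$. Applying this to $\tfrac1m a$ gives $\tilde I(\tfrac1m a)=\tfrac1m\tilde I(a)$, and combining the two produces $\tilde I(qa)=q\tilde I(a)$ for every positive rational $q$. The case $\alpha=0$ is separate and trivial: $\beta$-homogeneity forces $\tilde I(0)=\beta\tilde I(0)$, hence $\tilde I(0)=0$ since $\beta\neq 1$, so $\tilde I(0\cdot a)=0=0\cdot\tilde I(a)$.

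Next I would pass from rationals to an arbitrary real $\alpha>0$. For $a\geq 0$, choose rationals $q_n\uparrow\alpha$ and $q_n'\downarrow\alpha$; then $q_n a\leq\alpha a\leq q_n' a$ pointwise, so monotonicity of $\tilde I$ together with the rational case gives $q_n\tilde I(a)\leq\tilde I(\alpha a)\leq q_n'\tilde I(a)$, and letting $n\to\infty$ squeezes out $\tilde I(\alpha a)=\alpha\tilde I(a)$. For a general $a\in B^o$, boundedness of simple functions provides a constant $c\geq 0$ with $a+c\geq 0$, and then translation invariance plus the nonnegative case give
$$
\tilde I(\alpha a)+\alpha c=\tilde I(\alpha a+\alpha c)=\tilde I\big(\alpha(a+c)\big)=\alpha\,\tilde I(a+c)=\alpha\,\tilde I(a)+\alpha c,
$$
whence $\tilde I(\alpha a)=\alpha\tilde I(a)$, completing the proof.

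I do not expect a real obstacle here; the only point needing care is that the monotonicity sandwich yields the right chain of inequalities only when $a$ has a fixed sign, which is exactly why the reduction to $a\geq 0$ via translation invariance is performed before invoking monotonicity.
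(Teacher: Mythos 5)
Your proof is correct. The paper gives no argument of its own here---it simply cites Remark~1 of Schmeidler \cite{Schmeidler}, which derives positive homogeneity from exactly the ingredients you use (comonotonic additivity for rational scalars, monotonicity for the squeeze to irrational scalars, and a translation to reduce to the nonnegative case)---so your proposal is a correct and complete unpacking of the same approach.
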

\begin{proof}
This comes from Lemma \ref{lemma:como_add_I} and Lemma \ref{lemma:monoton_I} as noticed by Schmeidler \cite{Schmeidler} in Remark 1 p. 256.
\end{proof}

We proved therefore that $\tilde{I}:B^o\rightarrow \R$ satisfies comonotonic additivity (Lemma \ref{lemma:como_add_I}) and positive homogeneity (Lemma \ref{lemma:pos_homog_I}) and thus defining  defining $v(A)=\tilde{I}(1_A)$ for $A\in \F$, we can use Proposition 1 of Schmeidler \cite{Schmeidler} and we can conclude that for all $a\in B^o$
$$
\tilde{I}(a)=\int a dv.
$$
Hence for all $f,g\in \hh$,
$$
f\succsim g \text{ iff } I(U\circ f)\geq I(U\circ g) \text{ iff } \int \sum_t \beta^t u( f_t) dv \geq  \int \sum_t \beta^t u(g_t) dv.
$$
We turn to uniqueness. The fact that that $\beta$ si unique and $u$ is unique up to positive affine transformation comes from Koopmans \cite{Koop72}.\\
 Suppose now that the preference relation is represented by $J(U\circ f):=(1-\beta)\int \sum_t\beta^t u(f_t) dv'$. Fix $A\in \F$, let $x^1\in X$ be such that $u(x^1)=1$ and consider the stream $f\in \hh$ defined by
$$
f_t(\omega)=\begin{cases}
x^1 & \text{ if } \omega\in A\\
x^0 & \text{ otherwise}
\end{cases}
$$
for every $t\in\N$. Notice that $U\circ f=\frac{1_A}{1-\beta}$ and therefore $J(U\circ f)=v'(A)$ and $I(U\circ f)=v(A)$. Take $x\in X$ such that $u(x)= v(A)$ and define $g\in \hh$ by $g:=(x,x,\dots)$. Since $U\circ g=v(A)$ we obtain that $I(U\circ g)=v(A)=I(U\circ f)$ and therefore $f\sim g$. Notice that $J(U\circ g)=v(A)$, and since we supposed that $J$ represents the preference relation over $\hh$, then 
$v(A)=J(U\circ g)=J(U\circ f)=v'(A)$.
\end{proof}

\noindent \textit{\textbf{Proof of Theorem  \ref{th:EU}}}

\begin{proof}
Repeating the same steps as in Theorem \ref{th:CEU}, we can notice that the functional $\tilde{I}$ is additive on $B^o$, i.e. for all $a,b\in B^o$
$$
\tilde{I}(a+b)=\tilde{I}(a)+\tilde{I}(b).
$$
This comes from the fact that (SS) does not restrict additivity to comonotonic acts. Consider now two sets $A,B\in\F$ s.t. $A\cap B=\emptyset$. We have that $1_{A\cup B}=1_A+ 1_B$. Therefore
$$
v(A\cup B)=\tilde{I}(1_{A\cup B})=\tilde{I}(1_A+ 1_B)=\tilde{I}(1_A)+\tilde{I}(1_B)=v(A)+v(B).
$$
Which implies that $v$ is a  probability.
\end{proof}

\noindent \textit{\textbf{Proof of Theorem  \ref{th:CEU_v_convex}}}

\begin{proof}
Necessity is shown as in the proof of Theorem \ref{th:CEU}, using the fact that when $v$ is convex, $\int (a+b) dv\geq \int a dv+\int b dv$ for all $a,b\in B^o$ (see Proposition 3 of Schmeidler \cite{Schmeidler}).

We prove sufficiency. Since (PS) implies (CS) the proof of Theorem \ref{th:CEU} is valid. The only thing that is needed to show is that $v$ is convex, i.e. $v(A\cup B)+v(A\cap B)\geq v(A)+v(B)$.\\
Let $\tilde{I}:B^o\rightarrow \R$ be the functional defined in the proof of Theorem \ref{th:CEU}. 
\begin{lemma}\label{lemma:como_add_pref_convex}
Let $a,b,c\in B^o$ be such that $c$ is comonotonic with $b$. Then $\tilde{I}(a)=\tilde{I}(b)\Rightarrow \tilde{I}(a+c)\geq\tilde{I}(b+c)$. 
\end{lemma}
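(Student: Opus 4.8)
The plan is to mirror the proof of Lemma~\ref{lemma:como_add_preference}, replacing (CS) by (PS); the essential difference is that (PS) is a one-sided implication, so the chain of equivalences there becomes a chain of implications here, which is exactly what the present statement requires.

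First I would fix $a,b,c\in B^o$ with $c$ comonotonic with $b$, and pick $n\in\N$ large enough that $a,b,c$ are $\F_n$-measurable and the ranges of $\beta^n a,\beta^n b,\beta^n c$ lie in $[-1,1]\subseteq\mathrm{Range}(u)$. Exactly as in (\ref{eq:phi(h)=a}) I would define acts $f,g,h\in\hh$ that equal $x^*$ in every period $t\neq n$ and satisfy $f_n=u^{-1}(\beta^n a)$, $g_n=u^{-1}(\beta^n b)$, $h_n=u^{-1}(\beta^n c)$; these lie in $\hh$ (finite since $a,b,c$ are simple, $\F_n$-measurable by construction) and satisfy $U\circ f=\beta^{2n}a$, $U\circ g=\beta^{2n}b$, $U\circ h=\beta^{2n}c$. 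The point to verify is the comonotonicity hypothesis of (PS): since $c$ is comonotonic with $b$ in $B^o$ and the maps $u^{-1}$ and multiplication by $\beta^n>0$ are increasing, $h_n$ is comonotonic with $g_n$, and it is trivially comonotonic with the constant entries $g_i=x^*$ for $i>n$; thus $h_n$ is comonotonic with $g_i$ for all $i\geq n$, as required. Note we never need $c$ comonotonic with $a$, which is precisely why the hypothesis here is weaker than in Lemma~\ref{lemma:como_add_preference}.

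Next, by $\beta$-homogeneity of $\tilde I$ (Lemma~\ref{lemma:extension_I}, iterated), $\tilde I(a)=\tilde I(b)$ is equivalent to $\tilde I(\beta^{2n}a)=\tilde I(\beta^{2n}b)$, i.e. $I(U\circ f)=I(U\circ g)$, i.e. $V(f)=V(g)$, i.e. $f\sim g$; in particular $f\succsim g$. Applying (PS) to the shift at period $n$ with $h_n$ inserted in front of $f_n$ and $g_n$ yields $f^h\succsim g^h$, where $f^h=(x^*,\dots,x^*,h_n,f_n,x^*,\dots)$ and $g^h=(x^*,\dots,x^*,h_n,g_n,x^*,\dots)$. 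Since $U\circ f^h=\beta^{2n}c+\beta^{2n+1}a$ and $U\circ g^h=\beta^{2n}c+\beta^{2n+1}b$, this reads $\tilde I(\beta^{2n}c+\beta^{2n+1}a)\geq\tilde I(\beta^{2n}c+\beta^{2n+1}b)$, and dividing through by $\beta^{2n}$ (Lemma~\ref{lemma:extension_I}) gives $\tilde I(c+\beta a)\geq\tilde I(c+\beta b)$.

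Finally, to absorb the spurious factor $\beta$, I would rerun the argument verbatim with $\tfrac1\beta a$ and $\tfrac1\beta b$ in place of $a$ and $b$: this is legitimate because $c$ is still comonotonic with $\tfrac1\beta b$ (scaling by a positive constant preserves comonotonicity) and, by homogeneity, $\tilde I(\tfrac1\beta a)=\tfrac1\beta\tilde I(a)=\tfrac1\beta\tilde I(b)=\tilde I(\tfrac1\beta b)$, so the hypothesis of the lemma still holds. The output of the rerun is $\tilde I\bigl(c+\beta\cdot\tfrac1\beta a\bigr)\geq\tilde I\bigl(c+\beta\cdot\tfrac1\beta b\bigr)$, that is, $\tilde I(a+c)\geq\tilde I(b+c)$, as claimed. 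I do not expect any serious obstacle here; the only delicate point is the bookkeeping—tracking the powers of $\beta$ and checking that the weaker, one-sided comonotonicity hypothesis is exactly what (PS) consumes—after which this lemma feeds into the proof of Theorem~\ref{th:CEU_v_convex} by giving superadditivity of $\tilde I$ (take $b$ constant equal to $\tilde I(a)$ and use translation invariance) and hence convexity of $v$.
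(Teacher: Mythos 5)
Your proposal is correct and is exactly the route the paper takes: its proof of this lemma is literally ``run the proof of Lemma~\ref{lemma:como_add_preference} with (PS) in place of (CS),'' and you have filled in that outline accurately, including the two points that actually need checking (that (PS) only requires $h_n$ comonotonic with the $g_i$'s, so the weaker hypothesis ``$c$ comonotonic with $b$'' suffices, and the $\tfrac1\beta$ rescaling to remove the spurious factor of $\beta$).
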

\begin{proof}
It follows using (PS) and doing the proof as in Lemma \ref{lemma:como_add_preference}.
\end{proof}
Let $A,B\in \F$. Notice that $\tilde{I}(1_A)=v(A)=\tilde{I}(v(A)1_{\Omega})$ and $\tilde{I}(1_B)=v(B)=\tilde{I}(v(B)1_{\Omega})$. Since $1_B$ is comonotonic with $v(A)1_{\Omega}$, by Lemma \ref{lemma:como_add_pref_convex} it follows $\tilde{I}(1_A+1_B)\geq \tilde{I}(v(A)1_{\Omega}+1_B)=v(A)+ \tilde{I}(1_B)=v(A)+v(B)$. Notice that $1_A+1_B=1_{A\cup B}+1_{A\cap B}$, and moreover $1_{A\cup B}$ and $1_{A\cap B}$ are comonotone. Hence by comonotonic additivity of the Choquet integral
$$
v(A\cup B)+v(A\cap B)=\tilde{I}(1_{A\cup B})+\tilde{I}(1_{A\cap B})=
\tilde{I}(1_{A\cup B}+1_{A\cap B})=\tilde{I}(1_A+1_B)\geq v(A)+v(B).
$$

\end{proof}

\noindent \textit{\textbf{Proof of Theorem  \ref{th:CEU_v_convex_IH}}}

\begin{proof}
The proof follows from Theorem 1 of Kochov \cite{Kochov} and the Proposition in Schmeidler \cite{Schmeidler2} p. 582. 
\end{proof}

\noindent \textit{\textbf{Proof of Proposition \ref{prop:exchange}}}

\begin{proof}
Fix $h\in \hh$. Since $h$ is bounded there exists a compact set $K_h\subseteq X$ such that $\cup_t h_t(\Omega)\subset K_h$. Since $u:X\rightarrow \R$ is continuous, we can find $M\in\R_+$ such that $\left|u(x)\right|\leq M$ for all $x\in K_h$. Hence $\forall t\in \N$ and $\forall \omega \in \Omega$, $\beta^t \left|u( h_t(\omega))\right|\leq \beta^t M$. Since $\sum_t\beta^t M$ converges to $\frac{M}{1-\beta}$, by Theorem 7.10 of Rudin~\cite{Rudin}, the series of functions $\sum_t \beta^t u(h_t)$ converges uniformly on $\Omega$. Define $H_n=\sum_{t=0}^n \beta^t u(h_t)$ and $H= \sum_{t=0}^{\infty} \beta^t u(h_t)$. Since $H_n$ converges uniformly to $H$ on $\Omega$, for every $\epsilon>0$, there exists $N\in \N$ s.t. $n\geq N$ implies $\sup_{\omega}|H_n(\omega)-H(\omega)|<\epsilon$. Hence, for $n\geq N$, using the fact that  $P$ is a probability 
$$
\left|\int H_n dP-\int H dP\right|\leq \int \left|H_n - H\right| dP<\int \epsilon dP=\epsilon
$$
where the first inequality follows from Theorem 4.4.4 (ii) and (iii) of Rao and Rao \cite{RaoRao} (notice that $H_n$ and $H$ are simple functions by the finiteness of acts). This implies that the series converges and  $\lim_n\int H_n dP=\int H dP$. Rewriting explicitly we obtain
\begin{multline*}
\sum_t\beta^t \E_P\left[u(h_t)\right]=\lim_n \sum_{t=0}^n \beta^t \int u(h_t) dP\\
=\lim_n \int\sum_{t=0}^n \beta^t  u(h_t) dP=\int \sum_t \beta^t u(h_t) dP=\E_P\left[\sum_t \beta^t u(h_t)\right].
\end{multline*}
\end{proof}

\normalsize
\addcontentsline{toc}{section}{\refname}

\end{document}